\newtheorem{theorem}{Theorem}
\newtheorem{claim}{Claim}
\def\be{\begin{equation}}
\def\ee{\end{equation}}
\def\bearn{\begin{eqnarray*}}
\def\eearn{\end{eqnarray*}}
\def\bear{\begin{eqnarray}}
\def\eear{\end{eqnarray}}
\def\barr{\begin{array}}
\def\earr{\end{array}}
\def\bmat{\left(\begin{array}}
\def\emat{\end{array}\right)}
\begin{document}

\title{TDOA Source-Localization Technique Robust to Timing Attacks}
\author{Marguerite Delcourt \textit{Student Member, IEEE}, Jean-Yves Le Boudec \textit{Fellow, IEEE}
\thanks{M. Delcourt and J-Y. Le Boudec are with the School of Computer and Communication Sciences of the Swiss Federal Institute of Technology Lausanne, EPFL, Switzerland. e-mail: \{marguerite.delcourt$\vert$ jean-yves.leboudec\}@epfl.ch.}}
\maketitle

\begin{abstract}
In this paper, we focus on the localization of a passive source from time difference of arrival (TDOA) measurements. TDOA values are computed with respect to pairs of fixed sensors that are required to be accurately time-synchronized. This constitutes a weakness as all synchronization techniques are vulnerable to delay injections. Attackers are able either to spoof the signal or to inject asymmetric delays in the communication channel. By nature, TDOA measurements are highly sensitive to time-synchronization offsets between sensors. Our first contribution is to show that timing attacks can severely affect the localization process. With a delay of a few microseconds injected on one sensor, the resulting estimate might be several kilometers away from the true location of the unknown source. We also show that residual analysis does not enable the detection and identification of timing attacks. Our second contribution is to propose a two-step TDOA-localization technique that is robust against timing attacks. It uses a known source to define a weight for each pair of sensors, reflecting the confidence in their time synchronization. Our solution then uses the weighted least-squares estimator with the newly created weights and the TDOA measurements received from the unknown source. As a result, our method either identifies the network as being too corrupt to localize, or gives a corrected estimate of the unknown position along with a confidence metric. Numerical results illustrate the performance of our technique.

\end{abstract}

\section{Introduction}

The problem of localizing uncooperative sources that emit radio frequency signals has been extensively studied in the field of electronic warfare~\cite{Poisel12}, as well as for civil applications~\cite{LYJZDGW19}. Solutions were proposed in various settings such as sensor networks, radar, sonar or wireless communication \cite{WF16,T94,ZCJXS17}. Localization methods rely on the timely analysis of measurements such as angles of arrival (AOA), time differences of arrival (TDOA) between sensors, frequency differences of arrival (FDOA) between sensors or a combination of them ~\cite{Poisel12,MO7,GG3}. Therefore, an accurate synchronization of the time reference between sensors is essential. This can be achieved via satellite positioning systems or through packet-based protocols such as WhiteRabbit~\cite{CLWBS11}. However, both techniques are vulnerable to timing attacks, which constitutes a weakness for localization systems. Attackers are able either to spoof the signal~\cite{ZLQ12} or to insert a delay box on the links used for the synchronization communication~\cite{BALB16}. Such a delay box modifies asymmetrically the length of the communication paths, which indirectly injects delays in the time reference of sensors. The transmitted synchronization data is untouched by the attacker, hence it still satisfies all cryptographic security requirements in place such as authentication, integrity or confidentiality. A timing attack produces positive or negative offsets between the time reference of the sensors. The content of the data received and transmitted by the sensors also remains protected by the traditional cybersecurity protocols. The only noticeable effect of such an attack is if it affects the function of the system. Specifically, if it does not result in a misestimation of the location, then it is not detectable. Furthermore, an undetectable attack is required to be accepted and implemented by the clock controller of the sensors, a too large delay is flagged and raises suspicion. This is achieved by injecting small and gradually increasing delays. Although they require tampering with the communication network, timing attacks do not require any physical access to the potentially guarded sensors. Observe that it is realistic to assume non-guarded links between protected sensors. 

In this paper, we focus on the TDOA-based localization of a passive source from a network of fixed sensors whose time references could be maliciously manipulated. TDOA measurements offer high precision hence are widely used. However, they are easily attacked because they are particularly sensitive to timing errors. Due to the high propagation speed of the signal, a small synchronization error can lead to a large range difference error between the two sensors and the source. For example, if $3 \mu s$ are added to a TDOA measurement, then the corresponding range difference is increased by approximately $900$m. Consequently, an attacked network could become unable to localize sources or an attacked vehicle could unintendedly  enter on a wrong territory. Note that timing attacks are also a threat to other types of networks. For example, the control and operation of Smart Grids require an understanding of the system state at specific time intervals. It was shown that undetectable timing attacks on grid sensors are feasible and that they can lead to an incorrect state estimation, which in turn can result in a blackout or in asset degradation~\cite{SDBDBP19}. 

Our first contribution is to study the effect of timing attacks on the TDOA-based localization of an unknown source.
We show that the delays between sensors can lead to a misestimation of the source location. We inject a few microseconds into one sensor and obtain an estimate that is approximately $1$ km away from the true position of the source. We further explain how an attacker can compute positive or negative delays such that the localization process results in a specifically chosen misestimation. We also show that residual analysis does not enable the detection and identification of timing attacks.

Our second contribution is to propose a TDOA-localization technique that is robust against timing attacks. It works in two phases, the first analyses the error in TDOA measurements received from a known calibration source. As a result, it defines a weight for each pair of sensors; this reflects the confidence we have in their time synchronization. The second phase of our solution then uses the weighted least-squares (WLS) estimator with the newly created weights and the TDOA measurements received from the unknown source. Subsequently, our method either identifies the network as being too corrupt to localize, or gives a corrected estimate of the unknown position along with a confidence metric.

Our calibration technique requires the use of a trusted source of known coordinates. To our understanding, it is realistic to assume the existence of such a known source: a sensor of the localization network or a vehicle equipped with an emitter can be used for the calibration phase. In this first phase, we compare true TDOAs with observed TDOAs measured from signals emitted by the known source. We do not require the known source to be part of the synchronized network because our technique does not require the use of timestamps from the known source. In fact, our solution is immune to a timing attack on the known source. However, nothing prevents an attacker from storing emitted calibration signals in order to replay them in the direction of the attacked sensors at times and locations of his choice. For example, he could replay them in a manner that compensates for the introduced attack delays. To counter such an attack, we propose an encrypted authenticated challenge-response scheme where the calibration source is triggered to emit a one-time response signal. 

The rest of the paper is structured as follows. In Section~\ref{sec:related}, we discuss related works. We describe our system model in Section~\ref{sec:preliminaries}, together with technical background on TDOA-localization in an unattacked environment. We define the attacker's capabilities and study the effect of timing attacks in Section~\ref{sec:attack}. We present our calibration-based robust localization technique in Section~\ref{sec:defense}. We show how to counter replay attacks against our solution in Section~\ref{sec:replay}. We present the numerical results of the evaluation of the performance of our solution and of the confidence metric in Section~\ref{sec:perfeval}. Finally, we conclude the paper in Section~\ref{sec:ccl}.

\section{Related work}\label{sec:related}

In recent years, the subtleties of TDOA-based localization of passive sources sparked the interest of researchers. Due to the accuracy of this localization technique, its use is widespread. Its sensitivity to sensor location errors, oscillator-frequency-synchronization errors and time-synchronization errors between sensors has been the topic of various papers.
The authors of~\cite{FHJ3,FHJ32} studied the effect of phase and frequency-synchronization errors on the TDOA estimation, for different types of oscillators in the cases of single and multi-source localization. They propose a technique~\cite{FHJ4} to estimate both the TDOA measurement and the frequency error between sensors at low computational and memory complexities when the oscillator frequency error between two sensors is assumed to be non-zero and constant.
Similarly, the authors of~\cite{ZXH13,ZXH15} propose different techniques for estimating the TDOA between sensors, including oscillator phase and frequency errors. Their techniques are based on the Maximum Likelihood estimation of the TDOA, and one of them also estimates the frequency error of the oscillators. Then, similarly to this paper, the authors of~\cite{CWYW18,WYTCW18} focused on the localization of passive sources in systems of moving sensors that suffer from sensor position errors and from clock-synchronization bias between sensors. Their model, however, assumes that sensors are divided into groups within which sensors are time-synchronized, and that timing offsets are present only among different groups. In this paper, we assume that sensors are fixed at known locations and that they are all spaced out, therefore we assume that there can be time offsets between all sensor pairs. Furthermore, unlike in the previously mentioned papers, we assume that the synchronization offsets are not due only to the use of inaccurate hardware but also to the presence of malicious activity. As explained in Section~\ref{sec:attack}, we consider that an attacker is able to introduce time offsets in the clock of sensors in such a way that the resulting TDOA measurements seem plausible and intersect well, at a distant target location.

\section{System model}~\label{sec:preliminaries}

Consider a network of $N$ time-synchronized sensors $S_i$, $1 \leq i \leq N$ with known coordinates  $(x_i,y_i,z_i)$.
Suppose that a moving source $S$ of unknown coordinates $(x,y,z)$ produces a continuous signal $s(t)$. It is received by several network sensors  in the following form: $r_i(t)=s(t-\Delta_i)+e_i$, where $\Delta_i$ is the time needed for the signal to travel from the source to sensor $S_i$ and $e_i$ is Gaussian noise. The receiving sensors then simply timestamp the received signals and transmit them to a centralised control center. In order to compute an estimate $\widehat{\Delta}_{ij}$ of the true but unobservable TDOA $\Delta_{ij}$ between each pair of receiving sensors $(S_i,S_j)$, the control center then uses correlation techniques ~\cite{Piersol81,FHJ2,WK83} on the signal samples
\begin{equation}\label{model}
\widehat{\Delta}_{ij}=\Delta_{ij}+e_{ij}=\Delta_i-\Delta_j+e_{ij},
\end{equation}
where $e_{ij}$ is the noise associated with the estimated delay, in other words, the difference between the true and the estimated delay. Note that $e_{ij}$ is not equal to $e_{i}-e_{j}$.
Each TDOA $\Delta_{ij}$ defines a hyperbola on which the source should lie. The variance of the estimated delay defines a zone of probable location of the source along the corresponding hyperbola. By aggregating several measurements, the source is then estimated to be in a probable zone defined by the intersecting hyperbolae. The estimation of the source location is not trivial as it is a quadratic non-convex problem. The relation between an estimated delay $\widehat{\Delta}_{ij}$ and the source coordinates is given by the following equation \\
\begin{equation}
\widehat{\Delta}_{ij} =\frac{d(S_i,S)-d(S_j,S)}{c}  +  e_{ij} ,\\
\end{equation}
where $c$ is the propagation speed of the signal and \\ $d(S_i,S)=\sqrt{(x_i -x)^2+(y_i - y)^2+(z_i - z)^2}$ is the distance between sensor $S_i$ and the unknown source $S$. The unknowns in this equation are the source coordinates $(x,y,z)$ and the noise $e_{ij}$. This noise is distributed according to a centered normal distribution $\mathcal{N}(\mu_{ij}, \sigma_{ij})$ with $\mu_{ij}=0$ and where $\sigma_{ij}$ is unknown. \\ 
Supposing that the noise of the received signal at various sensors is i.i.d with same SNR, the covariance matrix is~\cite{CH94,DD09}
\begin{equation}\label{covariance}
K = \sigma^2
\begin{pmatrix}
1 & 1/2 & \cdots & 1/2 \\
1/2 & \ddots & \ddots & \vdots \\
\vdots & \ddots & \ddots & 1/2 \\
1/2 & \cdots & 1/2 & 1 \\
\end{pmatrix},
\end{equation}
where $\sigma^2$ is the TDOA noise variance. For simulations in the litterature, the standard deviation is often set arbitrarily to a plausible value such as $1.83 \mu s$ or $0.183 ns$. Nevertheless, more precise formulas to compute the standard deviation as a function of the SNRs of sensors are given in~\cite{Poisel12, Quazi}
\begin{itemize}
\item An aggregated SNR (not in dB) is computed from the SNRs of the two sensors $\gamma_{ij}=\text{SNR}(\gamma_i, \gamma_j ) $\\
\begin{equation}\label{eq:SNR}
\frac{1}{\gamma_{ij}}=\frac{1}{2}\left( \frac{1}{\gamma_i} + \frac{1}{\gamma_j}+\frac{1}{\gamma_i \gamma_j} \right) .
\end{equation}
\item For a low SNR value $\gamma_{ij}$, the standard deviation is given by
\begin{equation}\label{eq:lowsnr}
\sigma_{ij}= \sqrt{\frac{1}{8 \pi^2}}\frac{1}{\gamma_{ij}}\frac{1}{\sqrt{T_{int}W}}\frac{1}{f_0}\frac{1}{\sqrt{1+\frac{W^2}{12f_0}}}, 
\end{equation}
where $T_{int}$ is the integration time of the signal for one measurement, $W=f_2-f_1$ is the frequency bandwidth and $f_0$ is the center of frequency.
\item For a high SNR value $\gamma_{ij}$, the standard deviation is given by
\begin{equation}\label{eq:highsnr}
\sigma_{ij}= \sqrt{\frac{3}{4 \pi^2 T_{int}}}\frac{1}{\sqrt{\gamma_{ij}}}\frac{1}{\sqrt{f_2^3-f_1^3}} .
\end{equation}
\end{itemize}

From noisy TDOA measurements, both geometrical and analytical techniques of localization can be found in the litterature~\cite{GG3,DD09}. The non-linear least- squares (LS) estimator is a widespread localization technique that takes the noisy TDOA measurements as input and searches for a solution $(x,y,z)$ minimizing the sum of squared errors $$\arg \min_{x,y,z} \sum_{i>j} \left( \frac{d(S_i,S)-d(S_j,S)}{c}  - \widehat{\Delta_{ij}}\right)^2.$$ This estimator can be modified to solve the weighted least-squares problem (WLS) using the covariance matrix of the TDOA measurements. Throughout the rest of the paper, the Levenberg-Marquadt (LM) algorithm is used to solve this non-convex optimization problem, thus estimating the coordinates of unknown sources. More discussion on the LM algorithm is provided in Section~\ref{sec:perfeval}.

In order to reduce the complexity of storage and of the estimation process, a widespread technique is to consider only linearly independent measurements by considering only the TDOA measurements with respect to a reference sensor. This reduces the number of equations from $ \binom{N}{2}$ to $N-1$, where $N$ is the number of available sensors. However, this technique induces a loss of redundancy which can be fatal to the localization system in the case of an attack.

\section{Impact of time-synchronization attacks}\label{sec:attack}
In this section, we show how an attack on the time reference of one or more sensors alters the localization process presented in Section~\ref{sec:preliminaries}. We begin by defining the capabilities of the attacker.
\subsection{Attack Model}\label{sec:model}
We consider two timing attack models: one of them is referred to as the weak attack model, and the other is referred to as the strong attack model.
In both cases, we suppose that an attacker is able to introduce an offset $a_i\in \mathbb{R}$, which can be positive or negative, to the time reference of sensor $S_i$. Recall that such an attack does not require physical access to the sensors as it can be achieved via signal spoofing or delay-box insertion, depending on the preferred synchronization technique. With this capability, the goal of the attacker is to introduce errors in TDOA measurements, thus provoking a misestimation of the location of an unknown source. In the weak attack model, the attacker is not able to choose the misestimation, his goal is to create errors in the localization. In contrast, in the strong attack model, we further suppose that the attacker knows the true source coordinates and the network topology, namely the sensor coordinates. In this case, the objective of the attacker is to ensure that the localization process results in a specific targeted misestimation. 

\subsection{Impact on Localization}
Introducing delays $a_i$ and $a_j$ to sensors $S_i$ and $S_j$ respectively, adds components to Eq.(\ref{model}) 
\begin{equation}\label{eqatt}
\widehat{\Delta}_{ij}=\Delta_{ij}+a_i-a_j+e_{ij}=\Delta_{ij}+\mu_{ij}+e_{ij},
\end{equation}
where $\mu_{ij}=a_i-a_j$ is the introduced delay difference between the two sensors. Observe that $\widehat{\Delta}_{ij}-\Delta_{ij}$ is still distributed according to $\mathcal{N}(\mu_{ij}, \sigma_{ij})$, with $\mu_{ij}=0$ if no delays are inserted and $\mu_{ij}=a_i-a_j$ otherwise. Therefore, introducing delays does not guarantee an impact on measurements and thus on the localization. In fact, the attack is meaningful only if there is a non-negligible delay difference between the time references of at least two sensors. If an attacker introduces the same delay to all sensors of the network, they will all be under attack yet remain synchronized with each other. Hence, the functionality of the system will not be altered and the presence of malicious activity will be undetected. When the attack is such that the difference $\mu_{ij}> \sigma_{ij}$ is non-negligible in comparison to the Gaussian noise, the measurement $\widehat{\Delta}_{ij}$ and its corresponding hyperbola are significantly modified. As a result, the localization process fails to give an accurate estimate. Next, we give two attack scenarios that illustrate the actions of an attacker as in the two proposed attack models of Section~\ref{sec:model}.

In both scenarios, we consider a two-dimensional grid of side of $20$ km with four sensors placed as in Figures~\ref{fig:impactnano} and~\ref{fig:impacttarget}. The unknown source emits a signal that propagates to the four sensors. We suppose that the sensors send their received signal samples to a control center that processes them. The resulting TDOA measurements are given as input to the WLS estimator as described in Section~\ref{sec:preliminaries}. In this simulation, we set the noise standard deviation to $\sigma=2.192ns$ for all TDOA measurements, this value is further discussed in Section~\ref{sec:perfeval}. In the the weak attack-model scenario, the attacker delays the time reference of sensor $S_1$ by $2.47 \mu s$, which modifies the three corresponding hyperbolae, drawn in red in Figure~\ref{fig:impactnano}. The resulting WLS estimate of the unknown source location is incorrect by approximately $1$ km, thus illustrating that localization from TDOA measurements is highly sensitive to time offsets. As the injected delay increases, the accuracy of the estimate decreases. Note that in this scenario, the full set of measurements was considered. Supposing that we considered only three linearly independent measurements with respect to sensor $S_1$, then the estimate would be even less accurate as the WLS estimator would have received only wrong measurements as input. In other words, only the red hyperbolae from Figure~\ref{fig:impactnano} would be taken into account and none of the black ones. 

\begin{figure}
	\centering
	\includegraphics[width=0.8\columnwidth]{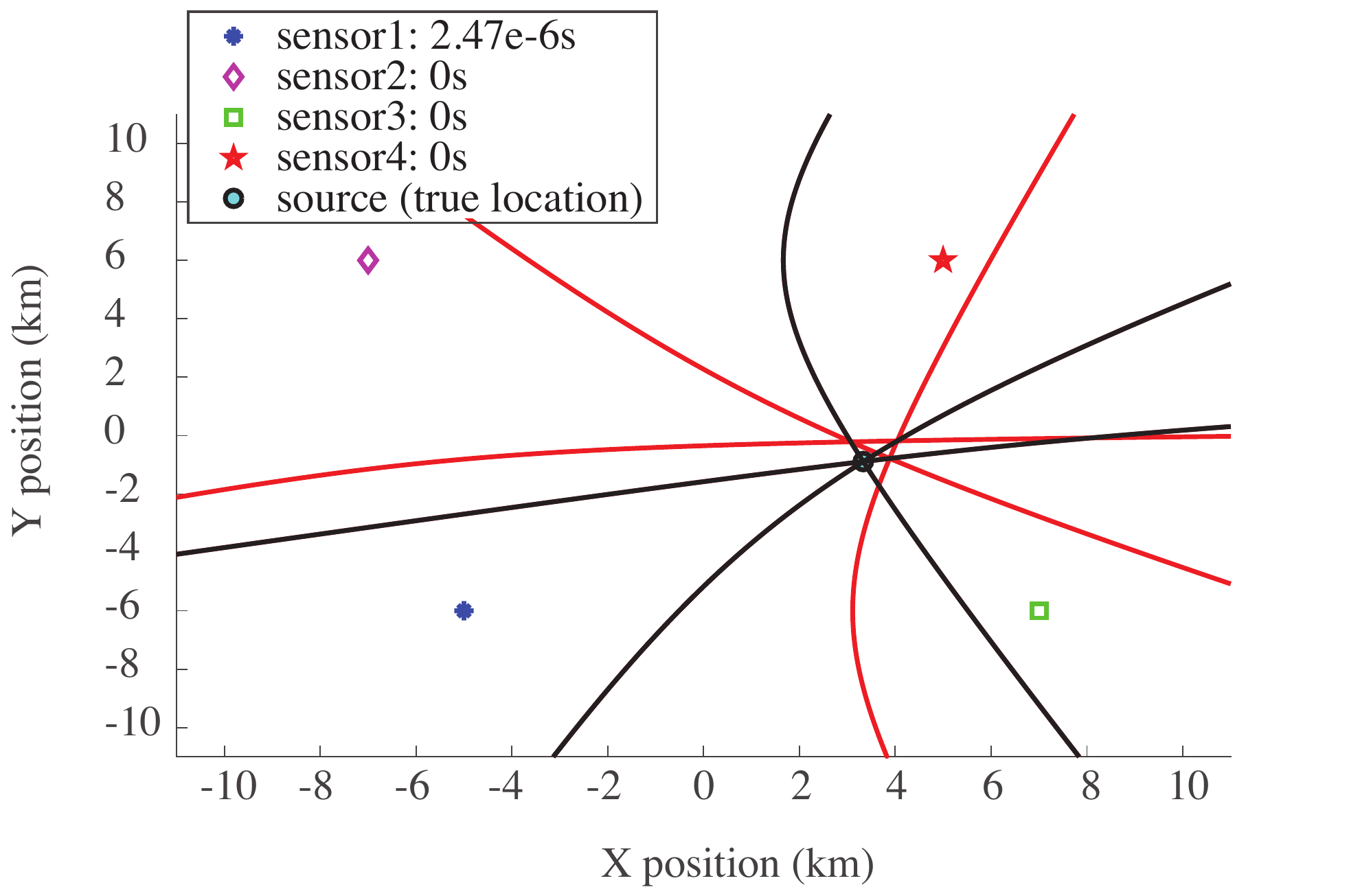}
	\caption{Weak-attack model scenario: sensor $1$ is delayed by $2.47\mu s$. The red TDOA hyperbolae are shifted by this delay and the resulting source estimate is incorrect by approximately $1$ km.}
	\label{fig:impactnano}\vspace{-10pt}
\end{figure}

In the strong attack-model scenario, the attacker knows the coordinates of the sensors and the true coordinates of the source. With such knowledge, he is able to compute the delays to be injected such that the estimation process results in a specific targeted misestimation: 
\begin{itemize}
\item From the source and sensor coordinates, he computes the true delays of propagation of the signal between the source and the sensors $\Delta_i=\frac{d(S_i,S)}{c},~\forall 1 \leq i \leq N$.
\item Similarly, he computes the true delays of propagation of the signal between the targeted misestimation location and the sensors $\Delta_i^t,~\forall 1 \leq i \leq N$.
\item The attack is simply the difference between the two: the delay to inject to sensor $S_i$ is $a_i=\Delta_i^t - \Delta_i$.
\end{itemize}
This attack is illustrated in Figure~\ref{fig:impacttarget}, where the delays are computed specifically such that all hyperbolae are modified in a plausible manner, intersecting near the targeted misestimation location. The resulting estimate is, as chosen by the attacker, almost $9$ km away from the true source location. 


\begin{figure}
	\centering
	\includegraphics[width=0.8\columnwidth]{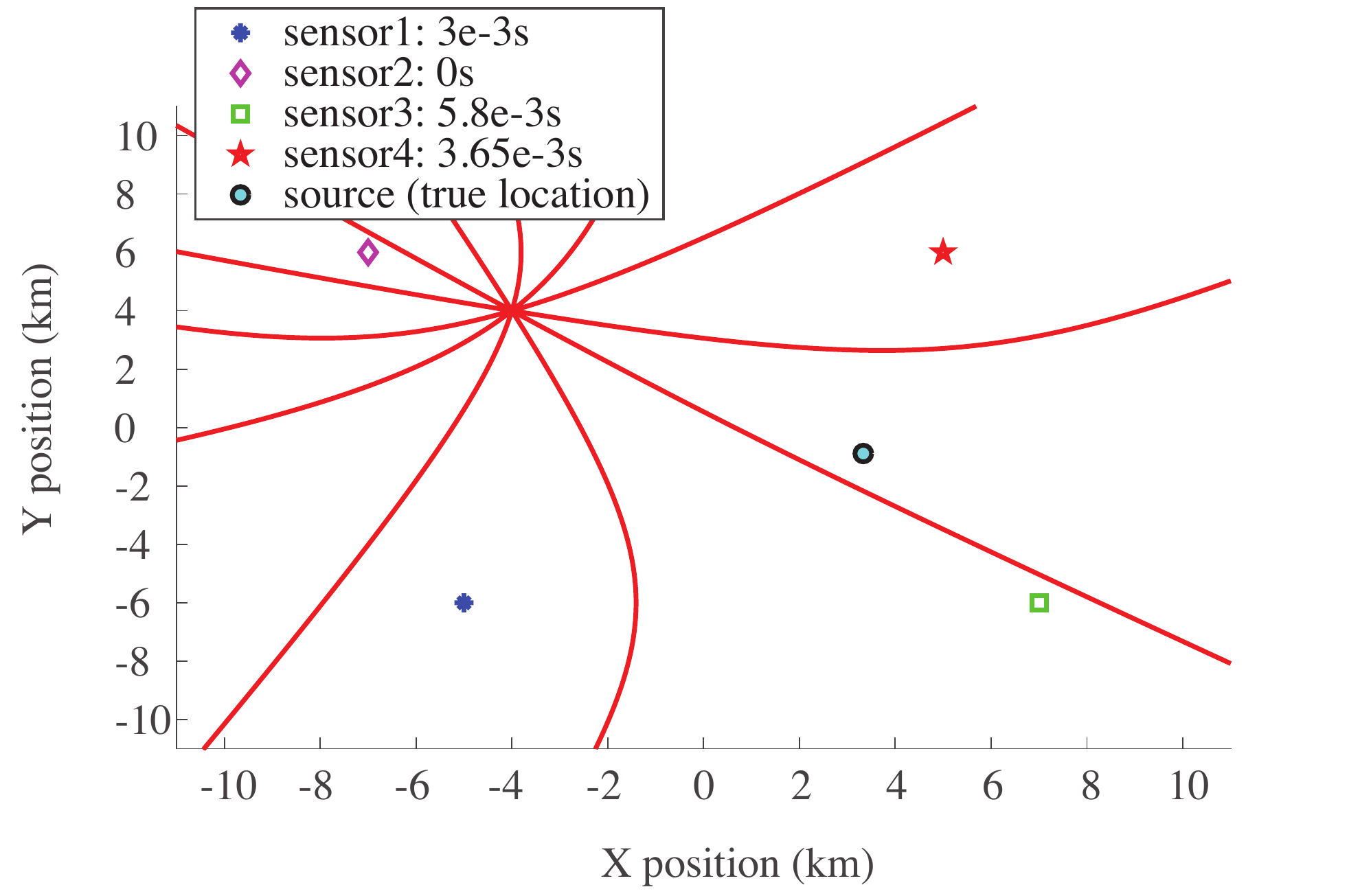}
	\caption{Strong-attack model scenario: delays are strategically computed which results in a specific misestimation almost $9$ km away from the true source location.}
	\label{fig:impacttarget}\vspace{-10pt}
\end{figure}

\subsection{Residual Analysis}~\label{subsec:residual}

Once an estimate $S_{est}$ is computed from measurement values, it is useful to compute and analyze the residuals in order either to assess the accuracy of the estimator or to attempt to detect and identify bad data in the measurements. For a pair of sensors $(S_i,S_j)$, the corresponding residual is computed as follows $$\frac{d(S_i,S_{est})-d(S_j,S_{est})}{c}  - \widehat{\Delta_{ij}} .$$ The residuals give insight on how well the estimate fits the measurements. Hence, if all hyperbolae intersect near the estimated location, then the residuals will have small values. However, if the points of intersection of the hyperbolae define a large probable zone of location, then the resulting estimate will be far from some or all hyperbolae, thus resulting in large residuals. Residual values are given in Table~\ref{residuals} for the no-attack scenario and for the scenarios depicted in Figures~\ref{fig:impactnano} and~\ref{fig:impacttarget} that correspond to the two attack models. Observe that the strong attack-model scenario and the no-attack scenario have similar residuals of order of magnitude below a nanosecond. This is as expected because, in both cases, the estimate satisfies well all measurements. Against the strong attack model the residual analysis fails to detect the presence of malicious activity. In the weak attack-model scenario, Table~\ref{residuals} shows that three residuals have values by three orders of magnitude larger than their corresponding values in the no-attack scenario. However, these large residuals are not all related to sensor $S_1$. In this case, the residual analysis succeeds in detecting the presence of malicious activity in the system but fails to identify it clearly. Overall, residual analysis is misleading for this study as it either fails to detect the presence of an attack or fails to identify untrustworthy measurements. Another approach for building resilience against timing attacks is proposed in the following section.

\begin{table*}[h!]
\centering
\begin{tabular}{|c|c|c|c|c|c|c|}
\hline
attack type 																		& $\Delta_{12}$ (s) & $\Delta_{13}$ (s) &$\Delta_{14}$ (s) &$\Delta_{23}$ (s) &$\Delta_{24}$ (s) &$\Delta_{34}$ (s) \\ \hline
weak attack model scenario					& $8.11e-11$  & $1.59e-9$     & $-3.56e-6$ & $-2.33e-9$ & $-3.57e-6$ & $-3.57e-6$ \\ \hline
strong attack model scenario & $-3.25e-9$ & $-1.34e-9$  &  $-5.38e-10$&$-4.32e-10$ &$2.19e-9$ & $2.2e-9$\\  \hline
no-attack 																& $-1.68e-9$ & $-4.73e-10$ & $2.03e-10$ & $-4.2e-10$ & $-6.3e-11$  & $5.68e-10$ \\ \hline
\end{tabular}
\caption{ \label{residuals} TDOA residuals. The values are similar in the no-attack and strong attack-model scenarios: the strong attack is undetected by residual analysis. Some values are increased by three orders of magnitude in the weak attack-model scenario, a majority vote points to $S_4$ as being attacked but the only attacked sensor is $S_1$: residual analysis detects malicious activity but does not identify it clearly.}
\end{table*}

\section{Calibration-based robust localization}\label{sec:defense}
As mentioned above, the analysis of residuals during the localization of an unknown source is not sufficient to counter timing attacks. In this section, we present a robust localization strategy that works in two phases. The first is a calibration phase that makes use of a known source to estimate the pairwise synchronized sensors of the network. The second phase of our strategy consists in the localization of an unknown source, given the results of the calibration process. We also show how to compute a confidence metric that gives insight on the accuracy of the estimated location. 

\subsection{Calibration Phase}\label{cal}

In this first phase, we use authenticated received signals emitted by known sources of known coordinates. Our technique then compares the resulting TDOA measurements with the true delays that should be observed. These true values are easily computed from the known coordinates as shown in the previous section for the computation of specific attack delays. The aim of the calibration phase is to define weights $w_{ij}$ for each pair of sensors $(S_i,S_j)$, reflecting the confidence level of their time-synchronization. Recall that $\widehat{\Delta}_{ij}-\Delta_{ij}$ is distributed according to $\mathcal{N}(\mu_{ij},\sigma_{ij})$ with $\mu_{ij}=0$ if $S_i$ and $S_j$ are time-synchronized. Therefore, the weight $w_{ij}$ must reflect the confidence with which we could declare that $\mu_{ij}=0$ given the true delay $\Delta_{ij}$ and multiple samples of $\widehat{\Delta}_{ij}$, denoted $\widehat{\Delta}_{ij}^1$,...$\widehat{\Delta}_{ij}^n$. For example, the weights could have binary values in order to define hard clusters within which sensors are time-synchronized: 
\[
w_{ij} =
\begin{cases}
& 0 \text{ if } \mu_{ij} > \sigma_{ij} \\
&1 \text{ if } \mu_{ij} \leq \sigma_{ij} \\
\end{cases} .
\]
Nevertheless, given noisy observations, the true cluster can only be estimated with a certain level of confidence. In practice, hard clustering methods are not able to minimize both the probabilities of false positives and false negatives. Therefore, we use a soft clustering method, i.e., we allow non-binary values $w_{ij}\in [0,1]$.

Hypothesis testing and the z-test in particular, are often used in order to determine whether a sample data-set is from a population with a specific mean. The z-test can be used only if the sample data is assumed to follow a normal distribution of known standard deviation as it is the case for the $n$ samples of $\widehat{\Delta}_{ij}-\Delta_{ij}$. The test computes the standardized statistic $$z_{ij}=\frac{\overline{(\widehat{\Delta}_{ij}-\Delta_{ij})}}{\frac{\sigma_{ij}}{\sqrt{n}}},$$ where $\overline{(\widehat{\Delta}_{ij}-\Delta_{ij})}$ corresponds to the sample mean of the $n$ observed delay differences. If it is truly the case that $\mu_{ij}=0$, then this standardized statistic $z_{ij}$ must be distributed according to $\mathcal{N}(0,1)$. Depending on the value of $z_{ij}$ and a predefined threshold, the test either accepts or rejects the hypothesis that $\mu_{ij}=0$. However, as mentioned above, the weights that we define are not constrained to have binary values. We define them as a function of the z-test p-values.Specifically, the weight $w_{ij}$ is a function of the probability of observing a test statistic larger or equal to $z_{ij}$ given that $\mu_{ij}=0$. This probability is computed as $\text{erfc}(z_{ij}/\sqrt{2})$, where $\text{erfc}(x)=\frac{2}{\sqrt{\pi}}\int_x^\infty e^{-t^2}dt$ is the well-known complementary error-function. As the difference between the measured and the true delay decreases, the corresponding p-value increases. In order to amplify the weight differences between pairs of sensors with reasonably large and extremely small p-values, we define weights to be $w_{ij}=(pvalue_{ij})^{1/v}$, where $v \in \mathbb{R}$. The exponent $v$ can be optimally chosen to maximize the weight difference for two specific p-values: $v=15.0776$ maximizes the weight difference for p-values $10^{-4}$ and $10^{-10}$. Nevertheless, our simulations show that all choices of $v \in [10,30]$ give satisfactory localization results with negligible variance. The computed weights are used in the localization process of any unknown source until they are updated. This means that any TDOA measurement that results from the correlation of signals received by sensors $S_i$ and $S_j$ will be weighed by $w_{ij}$. Note that if all p-values are equal to zero, no sensor pair data will be trusted to be used in the localization process. If all p-values are very low but larger than zero, then all sensor data is taken into consideration. But our confidence in the synchronization of all sensor pairs is low, hence we expect low accuracy. Whereas if all p-values are high, the confidence in the synchronization is high throughout the network, and we expect to obtain accurate estimates.

In order to give insight about the level of accuracy with which the localization process is able to compute an estimate of the location of any unknown source, we propose to add the computation of a confidence metric to the calibration phase. As in two dimensions, the minimal number of measurements required to localize a source is two, the accuracy of a location estimate depends on how well the second most trustworthy sensor pair seems to be synchronized; this is captured by the second best p-value. Furthermore, the accuracy improves with redundancy, hence if the third best p-value is also high, we expect that the estimate will be even more accurate. Hence, our proposed confidence metric \textit{cfd} is defined as the sum of the second and third best p-values to the power $1/v$, divided by two. Similarly in three dimensions, one level of redundancy is achieved by including the fourth best p-value and by dividing the sum by three instead of two. More discussion on this metric is provided in Section~\ref{sec:perfeval}. The operations of the calibration phase are recapitulated in Algorithm~\ref{alg:cal}. It shows how to compute the weights for each sensor pair and the confidence metric of the network at a given time.

\begin{algorithm}[h]
  \caption{Define-weights($\mathcal{N}, S_c, \sigma, c, \widehat{\Delta^1},...,\widehat{\Delta^n}, v,D$)}
  \begin{algorithmic}
  	\Require $\mathcal{N}$ (network of sensors $S_i$, $1 \leq i \leq N$), $S_c$ (known calibration source), $\sigma$ (standard deviation of TDOA measurements), $c$ (signal speed), $\widehat{\Delta^1},...,\widehat{\Delta^n}$ ($n$ symmetric matrices of TDOA measurements from $S_c$), $v$ (weight function exponent), $D$ (dimension 2D or 3D)
  	\For {$S_i \in \mathcal{N}$}
  	  \State $\Delta_i \leftarrow \frac{d(S_i,S_c)}{c}$
  	\EndFor 
	\State $weights \leftarrow \emptyset$
	\State $pvals \leftarrow \emptyset$
	\For {$(S_i, S_j)\in \mathcal{N}^2, i \neq j$}  	  	
  	
  		\State $(e_{ij}^1,...,e_{ij}^n) \leftarrow (\widehat{\Delta^1}_{ij}-\Delta_i+\Delta_j,...,\widehat{\Delta^n}_{ij}-\Delta_i+\Delta_j) $
  		\State $pvalue \leftarrow \text{z-test}(e_{ij}^1,...,e_{ij}^n, \sigma_{ij})$
  		\State $pvals \leftarrow pvals \cup pvalue $
  		\State $weights \leftarrow weights \cup (pvalue)^{1/v} $
  		\EndFor
  		\If {$D=2$}
  			\State \textit{cfd} $\leftarrow \frac{(\text{max}_{2^{nd}}(pvals))^{1/v} + (\text{max}_{3^{rd}}(pvals))^{1/v} }{2}$
  		\Else
  			\State \textit{cfd} $\leftarrow \frac{(\text{max}_{2^{nd}}(pvals))^{1/v} + (\text{max}_{3^{rd}}(pvals))^{1/v} + (\text{max}_{4^{th}}(pvals))^{1/v}}{3}$  		\EndIf
  		\State $weights \leftarrow \frac{weights}{\text{sum}(weights)} $
    \Ensure $weights$, $cfd$
  \end{algorithmic}
   \label{alg:cal}
\end{algorithm}

\subsection{Robust-Localization Phase}

The purpose of the second phase of our technique is to sporadically localize unknown sources by using data from sensors that are possibly suffering from a timing attack. In Section~\ref{sec:preliminaries}, the WLS estimator was introduced with weights defined by the covariance matrix of the noise of the measurements. Our robust localization technique further weights the squared errors with the weights computed during the calibration phase. In other words, our solution is to search for $(x,y,z)$ minimizing $$ \sum_{i>j} \frac{w_{ij}}{\sigma_{ij}^2} \left( \frac{d(S_i,S)-d(S_j,S)}{c}  - \widehat{\Delta_{ij}}\right)^2.$$ Algorithm~\ref{alg:localization} describes how the robust localization phase works for a two-dimensional grid, when at least two sensor pairs have non-zero weights. It uses the function noise$\_$std$(\gamma_i,\gamma_j)$ to compute the noise standard deviation of TDOA measurement $\widehat{\Delta_{ij}}$ from the SNR values at sensors $S_i$ and $S_j$ according to equations~\ref{eq:SNR},~\ref{eq:lowsnr} and~\ref{eq:highsnr}. In the case where all weights are set to zero, the algorithm states that the system is too corrupt to reliably estimate the location of the unknown source.

\begin{algorithm}[h]
  \caption{Robust-localization($weights, \mathcal{N}, \widehat{\Delta},\gamma, D$)}
  \begin{algorithmic}
  	\Require $weights$ (computed by Algorithm~\ref{alg:cal}), $\mathcal{N}$ (network of sensors), $\widehat{\Delta}$ (matrix of received TDOAs from unknown source), $\gamma$ (vector of SNR values for each sensor), $D$ (dimension 2D or 3D)
  	\If {$| \text{nonzero}(weights) | \geq D$ }
  	\State $\sigma \leftarrow \emptyset$
  	\For {$(S_i, S_j)\in \mathcal{N}^2, i \neq j$}  	
  	\State $\sigma \leftarrow \sigma \cup \text{noise}\_\text{std}(\gamma_i, \gamma_j)$
  	\EndFor
  	\State estimate $\leftarrow$ WLS$(\frac{weights}{\sigma^2}, \mathcal{M}, \mathcal{N})$
  	\Else
  	\State estimate $\leftarrow \text{"corrupt$\_$system"}$
  	\EndIf  
    \Ensure estimate
  \end{algorithmic}
   \label{alg:localization}
\end{algorithm}

Recall that in the weak attack-model scenario presented in Section~\ref{sec:attack}, only sensor $S_1$ is attacked with a delay of $2.47 \mu s$. When no defense strategy is in place, the LM algorithm on the full set of measurements gives a WLS estimate approximately $1$ km away from the true source location. Using our robust localization technique with the LM algorithm, we obtain an estimate only $40$cm away from the true source location. Furthermore, recall that in the strong attack-model scenario considered in Section~\ref{sec:attack}, the attack delays were computed specifically such that the localization of a particular unknown source would result in a targeted location. The computed delays shown on Figure~\ref{fig:impacttarget} are: $a_1=3ms$, $a_2=0s$, $a_3=5.8ms$ and $a_4=3.65ms$. The obtained estimate was, as chosen by the attacker, approximately $9$ km away from the source. Whereas our technique flags all sensor pairs as not synchronized, and all weights are set to zero. As a result, our algorithm states that the system is too corrupt to give an estimate.

\section{Countermeasures against replay attacks}\label{sec:replay}

The calibration phase of our solution presented in Section~\ref{cal}, relies on signals received by the sensors and emitted by a calibration source of known coordinates. In this section, we suppose that an attacker seeks to perform an attack on the calibration phase of our solution in order to make it attribute wrong weights. In case of such a successful attack, our robust localization would discard trustworthy measurements and/or trust attacked measurements. As a result, our technique would fail to detect malicious activity and would discard correct measurements.

In this section, we consider two additional attack models in which the attacker targets the calibration phase of our robust solution. His goal is to provoke a wrong weight attribution, thus maintaining the undetectability of his ongoing timing attack or neutralising the localization system. In the weak calibration attack model, we suppose that the attacker is able to record signals from the calibration source in order to replay them at times and locations of his choice. For example, he could replay them in a manner that compensates for the introduced attack delays. In the strong calibration attack model, we further suppose that the attacker is able to jam signals emitted by the calibration source~\cite{AB15}. We assume that the attacker does not jam continuously but selectively. Note that a sensor being continuously jammed would be flagged as suspicious due to other identification methods, such as SNR analysis. The fact that an attacker with complete control on the flow of signals in the network would be all powerful, further justifies the assumption of a selective jamming. 

In order to prevent such attacks, we propose an encrypted authenticated challenge-response scheme between the calibration source (CS) and the control center (CC). For the duration of this protocol, we assume that CS is stationary and emits signals continuously and that CC is responsible for triggering CS into embedding specific responses in the emitted signal. The first iteration of the scheme is depicted in Figure~\ref{fig:replay}.
\begin{figure}
	\centering
	\includegraphics[width=\columnwidth]{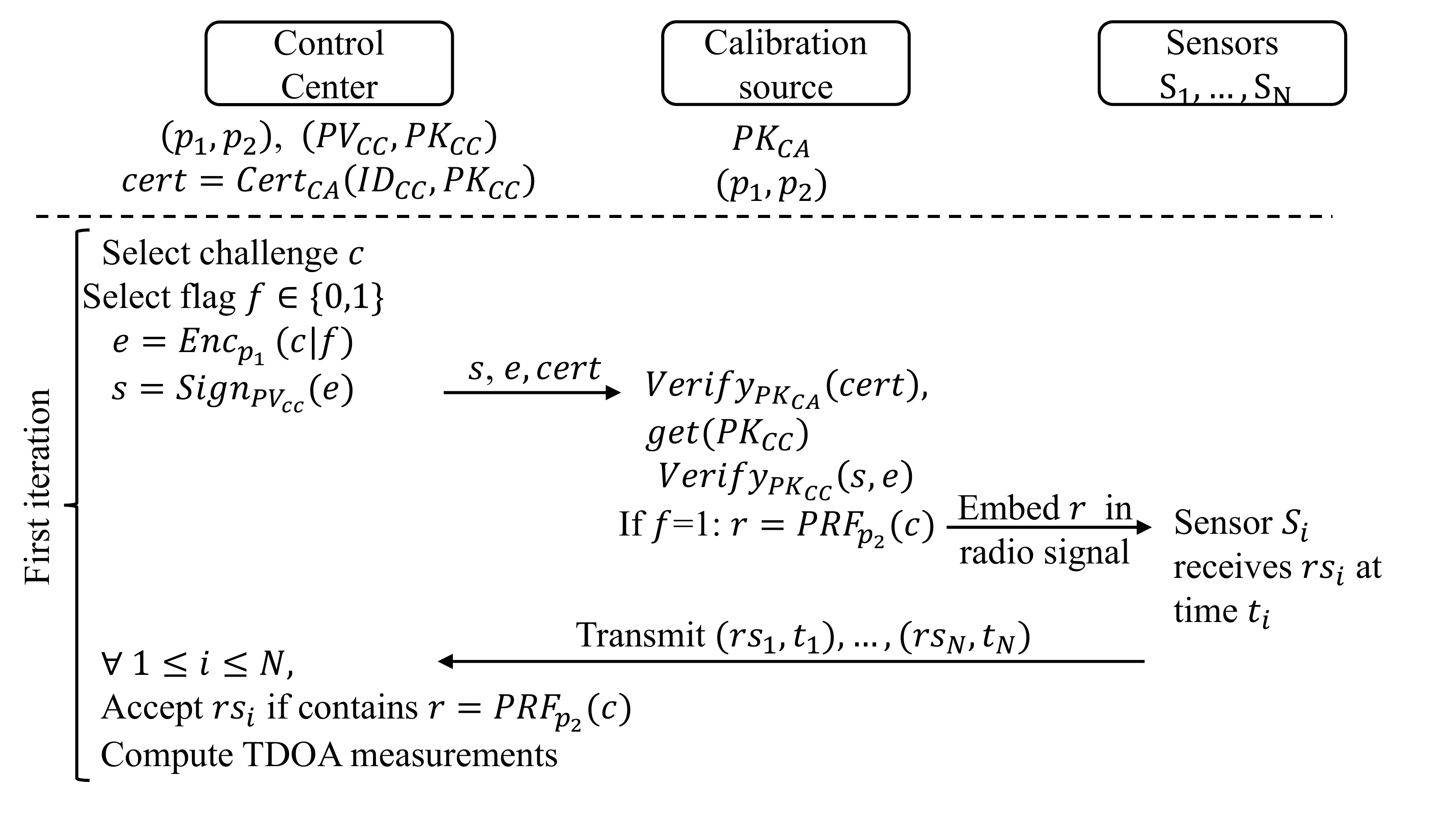}
	\caption{First iteration of the encrypted authenticated challenge-response scheme that counters replay attacks against the calibration phase of our robust-localization technique; following iterations are identical but don't include the $cert$ verification and the extraction of $PK_{CC}$.}
	\label{fig:replay}\vspace{-10pt}
\end{figure}
We suppose that due to a key infrastructure, CC has a certificate binding its identity with its public key $PK_{CC}$. We further suppose that CS knows the public key of the certificate authority $PK_{CA}$ and that CC and CS share two secrets $p_1$ and $p_2$ of large entropy. At the first iteration of the scheme, CC sends its certificate to CS who first verifies it using $PK_{CA}$ and then extracts the public key $PK_{CC}$. The latter will be used to authenticate CC to CS, at all subsequent iterations of the scheme. 

During the calibration phase, CC continuously sends the encryption of a one-time random challenge $c$ concatenated with a flag $f \in \{ 0,1\}$ that indicates whether or not CS should compute and embed a response $r$ in the emitted signal. The use of this flag enables us to hide from the attacker the locations in the signal where there are embedded responses. These responses are used to compute the TDOA measurements that are analyzed during the calibration phase. Note that, in order to be successful, the attacker needs to succeed in delaying a number                                                                                                                                                                                                                                                                                                                                                                                        of these responses embedded in the signal. Using the flag, instead of sending useful responses continuously, decreases the probability that a useful response is delayed by an adversary. This is due to the fact that in order to be undetected, the adversary is required to jam only selectively. Therefore, if we assume that there is a maximal frequency with which he can jam signals while remaining undetected, if he does not know which parts of the signal are useful and should be jammed, then the proportion of TDOA measurements that he successfully attacks is decreased. Note that, for this to be true, it is required that the flag be picked at random and that the signal emitted by CS without an embedded response be indistinguishable from when it contains a response.

The one-time challenge $c$ concatenated with $f$, is encrypted into ciphertext $e$ via a symmetric encryption scheme using secret key $p_1$. In order to authenticate itself to CS, CC also sends a signature of $e$ computed using its private key $PV_{CC}$. Upon reception of $(e,s)$, CS verifies the validity of the pair sent by CC. If it is valid, CS decrypts $e$ using $p_1$ and extracts $f$. If the flag is equal to $1$, then it computes a response $r$ that corresponds to the result of a chosen pseudo-random function with $c$ as input and secret $p_2$ as the key. Then, CS embeds $r$ in the signal. In contrast, if the flag is equal to $0$, CS does not embed anything in the signal. This signal is received by various known nearby sensors who simply timestamp everything before transmitting to CC. Finally, CC accepts only signals containing the valid response $r$ that are transmitted by specific nearby sensors able to correctly decode signals from CS. It then correlates the accepted signals to obtain TDOA measurements. Note that to analyze the correlation of the signals, the signals are aligned according to the timestamp that was associated by the receiving sensors. Hence, introducing delays in the transmission of received signals between the sensors and CC does not affect the resulting TDOA measurements. This constitutes one iteration of the overall scheme. We denote by $m$, the number of iterations with $f$ set to $1$. Hence, at the end of the process, there are $m$ resulting TDOA measurements per sensor pair.  Recall that the calibration phase described in Algorithm~\ref{alg:cal} requires $n$ observed measurements for each sensor pair. Below, we discuss the selection of $n$ among $m$ measurements per sensor pair and analyze the security of our scheme firstly against a weak calibration attacker and secondly against a strong calibration attacker. 

First, in order to enforce security against a weak calibration attacker, we propose to set $m=n$ and to discard signals containing a reoccurring valid specific response $r$. We show in Theorem~\ref{th:weak} that this scheme is secure against a weak calibration attacker with overwhelming probability, i.e., the attacker is unable to inject delays in the TDOA measurements of the calibration phase. The resulting TDOA measurements are then given as input to Algorithm~\ref{alg:cal} as described in Section~\ref{sec:defense}. 

\begin{theorem}\label{th:weak}
Assuming that challenges are unique, that $p_1$ and $p_2$  are secret, that signals with a reoccurring $r$ are discarded, and that all relay attacks take more time than the direct signal to propagate to the sensors, then the encrypted authenticated challenge-response scheme with $m=n$ iterations is secure in the weak calibration attack model with overwhelming probability.
\end{theorem}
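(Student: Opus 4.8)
The plan is to reduce the statement to a case analysis over the attacker's only capability in the weak calibration model, namely recording a signal emitted by CS and replaying it, necessarily delayed, toward sensors of his choice, and to show that every such action is either rejected or discarded by the scheme except on a negligible-probability event. First I would fix the security notion: a calibration TDOA measurement for a pair $(S_i,S_j)$ is \emph{corrupted} if CC accepts a response signal whose arrival time at one of the two sensors differs from its direct-path arrival, since this is exactly what injects an offset of the form $a_i-a_j$ into $\widehat{\Delta}_{ij}$. I would then enumerate what the attacker can and cannot do. He can observe and relay CS's emissions, but since $p_1$ is secret he cannot craft a ciphertext $e$ decrypting to a challenge of his choosing; since $PV_{CC}$ is secret he cannot forge the signature $s$ that CS verifies; and since $p_2$ is secret he cannot compute the response $r=\mathrm{PRF}_{p_2}(c)$ for any challenge $c$ he has not already seen answered. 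Each impossibility holds up to the negligible advantage of the symmetric cipher, the signature scheme, and the PRF, respectively, and these terms are collected at the end.

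The core of the argument is a timing-and-freshness case split resting on the observation that the \emph{only} source of a valid current response $r_j$ is CS's direct emission, because the attacker cannot forge it. Since the attacker cannot jam in the weak model, the direct-path signal always reaches every sensor, and by the relay assumption any replayed copy arrives strictly later. I would split on when a recorded signal is replayed relative to the round that produced it. In the \emph{same-round} case, each targeted sensor receives both the direct copy and the delayed replay, so the current valid response $r_j$ is decoded twice; the rule that discards signals with a reoccurring valid $r$ then eliminates the whole round, so no corrupted measurement survives. In the \emph{cross-round} case, the attacker replays during round $j$ a signal carrying the stale response $r_i$ of an earlier round $i$; since CC accepts only the current valid response $r_j$, and $r_j\neq r_i$ whenever $c_j\neq c_i$, the stale signal is rejected while the direct signal yields the correct TDOA. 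Uniqueness of challenges forces $c_j\neq c_i$, and the PRF guarantees that distinct challenges give distinct responses except with negligible probability. Combining the two cases, any attacker-induced delayed arrival that carries the valid current response necessarily duplicates it and is discarded, while any carrying a different value is rejected.

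To finish, I would assemble the failure bound: the defense can fail only through a challenge collision (two rounds drawing the same $c$), a PRF output collision on distinct challenges, a signature forgery, or the attacker correctly guessing a fresh $r$ without $p_2$. Each has negligible probability by the large entropy of the challenge space and the assumed security of the primitives, and a union bound over the $n=m$ iterations keeps the aggregate negligible, giving security with overwhelming probability. I expect the main obstacle to be the same-round case: one must argue carefully that the attacker cannot suppress the first-arriving direct signal, which would require jamming that is excluded in the weak model, so that any useful delayed injection provably duplicates the current $r$ and triggers the reoccurring-$r$ rule. Pinning down that ``injecting a delay into a calibration TDOA'' is equivalent to ``getting CC to time a non-direct arrival carrying the valid current response,'' and that this is unreachable without a duplicate, a stale, or a forged $r$, is where the real work lies; the cryptographic and probabilistic pieces are then routine.
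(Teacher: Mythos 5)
Your proposal is correct and follows essentially the same route as the paper's proof: unforgeability of $r$ (from the secrecy and entropy of $p_1$, $p_2$, and $c$) reduces the attacker to replaying CS's emissions, and the relay-delay assumption guarantees every replay arrives after the direct signal and is therefore discarded as a reoccurring $r$. Your version is more careful than the paper's --- the explicit same-round versus cross-round case split, the observation that the attacker cannot suppress the direct signal without jamming, and the union bound over the $n$ iterations are all left implicit in the published proof --- but the underlying argument is the same.
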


\begin{proof}
Since $p_1$ and $p_2$ are secrets of large entropy and since $c$ is unique and also a secret of large entropy, there is a negligible probability that an adversary manages to forge a valid response $r$ that corresponds to the latest challenge $c$. Therefore, we assume that he can emit signals containing valid responses only by replaying those emitted by CS. As we assume that all relay attacks take more time than the direct signal takes to propagate to the sensors, all replayed signals will be received by CC with timestamps that are more recent than the timestamps of the direct signals. Hence, as they were already received by CC, replayed signals are all discarded. We conclude that the encrypted authenticated challenge-response scheme with $m=n$ iterations is secure in the weak calibration attack model, as long as the adversary is unable to forge an $r$, hence it is secure with overwhelming probability.
\end{proof}

Second, when we consider a strong calibration attacker able to jam signals, it is possible that the first occurring response $r$ is a replayed signal. In this scenario, we assume that the attacker can selectively jam signals emitted by CS and replay them such that the direct signal is never received by the sensors.
In order to enforce security in this strong attack model, we propose to set $m$ much larger than $n$. In other words, we propose to select a small portion $n$ of the received measurements $m$ to use as input for Algorithm~\ref{alg:cal}. As mentioned earlier, we suppose that the attacker does not jam continuously but selectively so that only $qm$ measurements are successfully delayed by the adversary, where $q$ is the proportion of attacked measurements. As the TDOA noise is distributed according to a Gaussian distribution, the $(1-q)m$ unattacked measurements are distributed according to a Gaussian distribution centered in the observable TDOA value that depends on the coordinates of the sensors, the coordinates of CS and the possible time-synchronization offset between the sensors. In contrast, the $qm$ attacked measurements are expected to be distributed according to another Gaussian distribution centered around a value that depends on the delay difference with which the signal is replayed to the different sensors. 

Our strategy is to analyze the distribution of the $m$ observed measurements in order to extract the center of the tallest Gaussian distribution and to select the $n$ measurements that are nearest to it. More specifically, we use a binning algorithm on all the received measurements; the algorithm returns $b$ bins of uniform width covering the range of the $m$ measurements. This reveals the shape of the sample distribution. We then extract the bin of highest density and iteratively extract its surrounding bins, until the sub-sampled dataset is of cardinality at least $n$. Then, we estimate the probability density function of the selected data and search for its peak value. As a result, we obtain an estimate of the center of the tallest underlying Gaussian distribution, in other words, of the TDOA value that should be observed. Finally, we select the $n$ measurements that are nearest to the newly found estimate. Observe that this technique works in favour of the system only when there are fewer attacked than unattacked measurements during the calibration phase. Otherwise, the tallest Gaussian distribution would correspond to the underlying distribution of the attacked measurements. These selected measurements can then be given as input to Algorithm~\ref{alg:cal} as described in Section~\ref{sec:defense}. We show the following claim through numerical analysis in Appendix.
\begin{claim}\label{strong}
As long as the proportion $q$ of attacked measurements is lower or equal to $0.45$, our strategy is secure against a strong calibration attacker.
\end{claim}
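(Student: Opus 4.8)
The plan is to establish the claim empirically through Monte Carlo simulation, as announced in the text, but to organize the experiments around the probabilistic model that governs the selection procedure. First I would formalize the security objective: the strategy is \emph{secure} if the selection step (binning, peak estimation, and nearest-$n$ extraction) returns a subsample of $n$ measurements dominated by the $(1-q)m$ unattacked measurements, so that the center recovered as the peak of the estimated density lies near the true TDOA value $\mu_0$ rather than near the attacker-chosen value $\mu_a$. Concretely, I would declare success when the recovered peak is closer to $\mu_0$ than to $\mu_a$, equivalently when the measurements fed into Algorithm~\ref{alg:cal} are essentially uncorrupted.

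Next I would fix the generative model implied by the text: the $(1-q)m$ unattacked measurements are drawn i.i.d.\ from $\mathcal{N}(\mu_0,\sigma^2)$ and the $qm$ attacked measurements from $\mathcal{N}(\mu_a,\sigma^2)$, with $\sigma$ the common TDOA noise standard deviation. The crucial structural observation is that, in expectation, the mixture density has its global maximum at (or very near) $\mu_0$ precisely when the unattacked mass exceeds the attacked mass, i.e.\ when $1-q>q$. This yields the idealized threshold $q<1/2$: with equal variances the peak height at $\mu_0$ is governed by the weight $(1-q)$ plus a small spill-over from the attacked lobe, while the peak height at $\mu_a$ is governed by $q$ plus the corresponding spill-over, so the unattacked lobe wins whenever $q<1/2$ and the two centers are not pathologically close.

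I would then sweep the free parameters — the proportion $q$, the normalized separation $\lvert\mu_a-\mu_0\rvert/\sigma$, the total count $m$, the target size $n$, and the bin count $b$ — and for each configuration run many independent trials, recording the empirical probability of success. The expected picture is that well-separated centers are trivially safe for any $q<1/2$, while the genuinely adversarial regime is an intermediate separation combined with $q$ approaching $1/2$: there, finite-sample fluctuations in the per-bin counts can momentarily inflate the attacked lobe above the unattacked one, or merge the two lobes so that the estimated peak drifts toward $\mu_a$. The role of the numerical study is to quantify how large a safety margin below $1/2$ is needed to drive the failure probability down to a negligible level across this regime; the reported value $q\le 0.45$ is the resulting operating bound.

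The main obstacle I anticipate is exactly this boundary behaviour near $q=1/2$. Because the two Gaussians share $\sigma$, the separation of their peak heights is only of order $(1-2q)/\sigma$, which vanishes as $q\to 1/2$; the binning resolution (through $b$ and hence the bin width) and the finite sample size $m$ then determine whether the tallest bin reliably falls in the unattacked lobe. Making the argument convincing therefore requires demonstrating robustness of the $0.45$ bound over the relevant ranges of $m$, $b$, and separation, rather than for a single tuned configuration, and arguing that the residual failure probability can be made as small as desired by increasing $m$ — equivalently, by using a larger pool $m\gg n$, which is precisely the design choice prescribed for the strong calibration attack model.
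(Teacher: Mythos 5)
Your overall approach---establishing the $q\le 0.45$ threshold by Monte Carlo simulation of the mixture model $\,(1-q)\,\mathcal{N}(\mu_0,\sigma^2)+q\,\mathcal{N}(\mu_a,\sigma^2)$, with the idealized $q<1/2$ dominance argument and an empirical safety margin to absorb finite-sample and binning effects---is exactly the route the paper takes (the paper fixes $m=160$, $n=30$, $b=12$, attack sizes $a\sigma$ with $a\in\{3,6,15000\}$, and sweeps $q$ over a grid including $0.45$). However, there are two gaps relative to what the claim actually requires. First, ``secure against a strong calibration attacker'' has \emph{two} failure modes, and you only test one. Your scenario is the false-distrust direction: the sensor pair is genuinely synchronized and the attacker replays delayed signals to make Algorithm~\ref{alg:cal} output a weight near $0$. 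The paper separately verifies the false-trust direction: the sensor pair is \emph{already} suffering a timing attack (honest calibration measurements are centered at the offset value), and the calibration attacker replays a fraction $q$ of the signals so as to \emph{compensate} that offset and push the weight toward $1$, whitewashing the ongoing attack. Your peak-dominance criterion would cover this case if applied symmetrically, but your write-up never states the second requirement (weight must stay near $0$), and it must be checked separately because the exponentiated p-value is highly nonlinear and the two requirements degrade differently as the lobes merge.

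Second, your success criterion (recovered peak closer to $\mu_0$ than to $\mu_a$) is a proxy for the quantity the claim is about, namely the weight that Algorithm~\ref{alg:cal} ultimately assigns. These are not equivalent in the hard regime: for a small separation such as $a=3$ with $q=0.45$, the two Gaussians overlap, the density peak can be recovered on the correct side yet be slightly shifted, and the $n$ nearest measurements remain contaminated enough to depress the z-test p-value. The paper's experiments measure the output weight directly and observe precisely this effect (the weight is ``sometimes low although always large enough'' at that corner), which is what justifies stopping at $0.45$ rather than pushing closer to $0.5$. You should therefore report the distribution of the final weights (with and without the selection step, for both scenarios), not merely the probability that the peak lands in the correct lobe.
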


\section{Performance Evaluation}\label{sec:perfeval}
In this section, we evaluate the performance of our solution presented in Section~\ref{sec:defense} by considering various attack scenarios. We show through Matlab simulations that our solution is robust to timing attacks and that the confidence metric is reliable. We start by defining the testing environment in two dimensions. We then consider a three-dimensional simulation.

\subsection{Two-Dimensional Testing Environment}
We consider a two-dimensional grid of side of $20$ km on which we place four sensors, as illustrated in figures~\ref{fig:impactnano} and~\ref{fig:impacttarget}. We assume that at the time instant of the analysis, the unknown source is located at coordinates  $[3333.3,-889.1111]$ and the known calibration source at $[0,-4000]$ all in meters, with respect to the center of the grid. In Section~\ref{sec:3D}, we show a three-dimensional simulation on the same grid, to which we add altitude coordinates.
In our simulations, we assume that the TDOA noise is i.i.d with a standard deviation of $2.192n$s for all measurements. We computed this standard deviation from Eqs.~\ref{eq:SNR},~\ref{eq:lowsnr},~\ref{eq:highsnr} with the following parameters:
\begin{itemize}
\item the integration window of the signal $T_{int}=60$ ms,
\item the bandwidth of the signal $W=1$ MHz,
\item the center of frequency $f_0=30'000$ Hz,
\item the SNR $\gamma_i=3$ dB $\forall S_i$.
\end{itemize}
For the calibration phase, we created simulated TDOA measurements $\widehat{\Delta}_{ij}$ from the \emph{known} calibration source for all sensor pairs $(S_i,S_j)$ with the following procedure:
\begin{itemize}
\item we use the coordinates to compute the true time $\Delta_i$ taken by the signal to propagate from the calibration source to sensor $S_i$ for both sensors,
\item we compute the true TDOA $\Delta_{ij}=\Delta_i - \Delta_j$,
\item depending on the attack scenario, we add delays \\ $\Delta_{ij}'=\Delta_{ij} + a_i - a_j$,
\item we add Gaussian noise $\widehat{\Delta_{ij}}=\Delta_{ij}'+e_{ij}$ \\ with $e_{ij} \in \mathcal{N}(0,2.192 e-9)$,
\item we repeat this $n=15$ times in order to have $15$ measurements for each pair of sensors.
\end{itemize}
Our calibration procedure defined by Algorithm~\ref{alg:cal} then uses the simulated measurements in the following way:
\begin{enumerate}
\item We compute the observed error $\widehat{e_{ij}}=\widehat{\Delta}_{ij}-\Delta_{ij}$ for each measurement, \emph{using the known coordinates} of the calibration source.
\item For each pair of sensors, we compute the p-value resulting from the z-test with the $15$ observed errors $\widehat{e_{ij}}$.
\item We compute the weights $w$ by exponentiating all p-values to $1/v$ with $v=15.0776$ and normalizing them.
\item We compute the confidence metric as the sum of the second and third largest weights before normalization, divided by two.
\end{enumerate}

Recall that the exponent value was chosen to maximise the weight difference between p-values $10^{-4}$ and $10^{-10}$ and that we experimentally observed that all choices of $v \in [10,30]$ give satisfactory results with low variance between them.

For each pair of sensors such that the corresponding weight is non-zero, we create a simulated TDOA measurement $\widehat{\Delta}_{ij}$ from the \emph{unknown} source to localize:
\begin{itemize}
\item we use the coordinates to compute the true time $\Delta_i$ taken by the signal to propagate from the unknown source to sensor $S_i$ for both sensors,
\item we compute the true TDOA $\Delta_{ij}=\Delta_i - \Delta_j$,
\item depending on the attack scenario, we add delays \\ $\Delta_{ij}'=\Delta_{ij} + a_i -a_j$,
\item we add Gaussian noise $\widehat{\Delta_{ij}}=\Delta_{ij}'+e_{ij}$ \\with $e_{ij} \in \mathcal{N}(0,2.192 e-9)$.
\end{itemize}
We implement the robust localization with the simulated measurements as in Algorithm~\ref{alg:localization}:
\begin{enumerate}
\item We compute a geometrical estimate of the source location $(x_g,y_g)$ as explained below.
\item We use the Matlab LM algorithm as a WLS estimator on all $\widehat{\Delta}_{ij}$ with weights $w_{ij}$ computed at step $(3)$. The initial step size is by default $0.01$ and the initial solution is $(x_g,y_g)$. We obtain the estimated robust source coordinates $(x,y)$.
\end{enumerate}
In all simulations, we analyze the confidence metric and the distance between our estimate and the true source coordinates. 
 
Recall that the WLS estimator we use is the LM algorithm. It is a gradient descent algorithm that requires an initial solution and step size that are updated iteratively. When the gradient is small, the step size is chosen small so that we can move gradually closer to the minima without missing it; in this case the algorithm is similar to the Gauss-Newton method. In contrast, when the gradient is large, the step size is chosen large and the algorithm behaves similarly to the steepest descent method. The initial solution we use is a geometrical estimate computed as the coordinate-wise weighted median of intersection points of all hyperbolae. The weight of both coordinates of an intersection point corresponds to the smallest weight among the weights of the corresponding TDOAs.

\subsection{Performance in Attack Scenarios}

In order to test the performance of our technique, we apply it in five different scenarios of attack with increasing attack delays. Each attack scenario corresponds to an attack location, specifically to a subset of the sensors. In all of the scenarios, we perform attacks with $25$ different delays ranging from $0$ to $50$ seconds. We simulated ten thousand times each attack scenario with each delay size. The $1'250'000$ results we obtained are presented in Figure~\ref{fig:precision}, where each color corresponds to a specific attack scenario. More specifically, Figure~\ref{fig:precision}a shows the confidence metric as a function of the distance between the true source position and our estimate in meters for each simulation. We refer to this distance as the estimate error. Figures~\ref{fig:precision}b and~\ref{fig:precision}c show the sample mean and confidence interval of the estimate error as a function of the delay size in seconds.  

The first scenario is a control scenario in which no attack takes place, it is presented in beige on the figures. We observe that the estimate error is on average below $0.5$m and that the confidence metric is always high, above $0.8$. 

Then, we consider a scenario where only sensor $S_1$ is under attack, it is presented in pastel green on the figures. We observe that whatever the delay injected, the estimate is always quite accurate with an error on average slightly above the one from the no-attack scenario. The estimate error is still in the proximity of $0.5$m and remains below $4$m. Figure~\ref{fig:precision}a shows that the confidence metric is also quite high as it remains above $0.7$. The slight reduction of estimate error comes from the fact that discarding signals from $S_1$ reduces the redundancy but still provides enough signals to locate accurately with one level of redundancy.

The third scenario, presented in purple, consists of attacking two sensors, $S_1$ and $S_2$, with the same delay. We observe that when the injected delay is below the standard deviation of the noise, the estimate error is as in the no-attack scenario. Then, as the delay increases, the distance between the estimate and the true source position also increases until it stabilises. This is explained by the fact that as the delay increases, the impacted TDOAs are less trusted but are still taken into account with a small weight, until they are completely discarded. At some point, only two TDOA values are trusted and used for localization. This is exactly enough, as the simulation is in 2D, but removes all redundancy. Therefore, the estimate grows less accurate. Nevertheless, the estimate error stays well below $4$m on average. This decrease in accuracy is accompanied by a decrease of the confidence metric value that is concentrated around $0.5$ and remains between $0.35$ and $0.7$. This illustrates that the source can be localized fairly correctly but with less accuracy as there is no redundancy.

The fourth scenario, presented in light blue, is performed by attacking $S_1$ and $S_2$ with the same delay of $500$s, such that only two TDOAs are used from the start of the simulation. Then, we increase slightly the delay difference between the sensors of synchronized pairs. Specifically, we attack in the following way:
\begin{itemize}
\item delay for $S_1$: $500$s,
\item delay for $S_2$: $(500+d)$s,
\item delay for $S_3$: $0$s,
\item delay for $S_4$: $(0+d)$s,
\end{itemize}
where $d$ is the delay difference that takes values from the same $25$ different delays considered above. We observe on Figure~\ref{fig:precision}c that the distance between the true source position and our estimate starts as in the previous scenario, which is as expected because only two TDOAs are trusted. Then, as the delay differences increase, the two TDOAs are increasingly affected, thus the localization relies solely on wrong TDOAs and the resulting estimate grows less accurate. We observe that when the delay differences are around $30n$s, the two TDOAs are declared as untrustworthy and the overall system as too corrupted to localize. In the worst case scenario, the estimate error is approximately $50$m but the corresponding confidence metric is around $2 \times 10^{-21}$, which is extremely low. For this simulation, Figure~\ref{fig:precision} shows that the estimates are less accurate and that the confidence metric is low, below $0.35$. Nevertheless, we can identify a grey zone, where the estimate error is below $4$m and the confidence metric is also low. In this case, a user of our solution might want to discard the estimate when in fact it is not very far from the true position of the source. Although this is unfortunate, it constitutes a false negative that is not as fatal as trusting a very inaccurate estimate.

We identified more cases of false negatives in the fifth scenario. In this setting, we start by attacking sensor $S_4$ with a delay of $500$s. In this way, the TDOAs with respect to $S_4$ are discarded from the start. Then, similarly to the fourth scenario, we increase slightly the delay differences between the three synchronized sensors:
\begin{itemize}
\item delay for $S_1$: $0$s,
\item delay for $S_2$: $(0+d)$s,
\item delay for $S_3$: $(0+2d)$s,
\item delay for $S_4$: $500$s.
\end{itemize}
The results of this scenario are depicted in coral red in Figure~\ref{fig:precision}. The accuracy of the estimate decreases in a fashion similar to the previous scenario but with lower error values. This is due to the fact that three TDOA values are used instead of two. Even though they are under attack, they include one level of redundancy. The largest distance between the source and the estimate remains below $10$m. After that, the system is declared as too corrupt. Although the accuracy is better in this scenario than in the previous one, the confidence metric ranges also between $0$ and $0.35$. Similarly, there are cases of accurate estimation but low confidence, which constitutes false negatives.

\begin{figure*}[t]

\includegraphics[width=\linewidth]{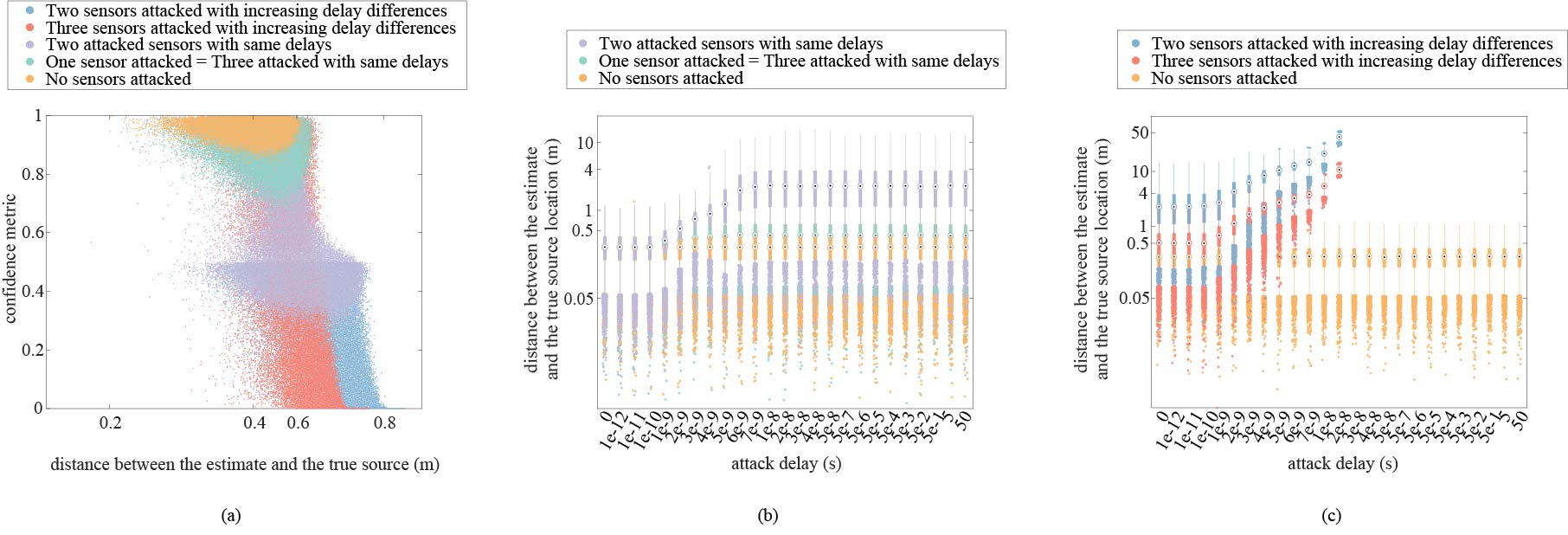}

\caption{Results from five different timing attack scenarios each with $25$ different delays, each simulated $10'000$ times: (a) confidence metric as a function of the distance between the true source position and the estimate provided by our robust solution, we observe that the metric is related to the accuracy and shows if there is redundancy in the measurements. There are some false-negative cases but no false positives. (b) the mean and confidence interval of the estimate error for each attack delay for three different scenarios. (c) the mean and confidence interval of the estimate error for each attack delay for two other scenarios: when the system is too corrupt, it stops localizing.}
\label{fig:precision}
\end{figure*}

In summary, our method gives an estimate which is always quite accurate, considering the fact that blindly trusting all TDOAs would lead to errors of many kilometers. When the system is too corrupt, it declares that the confidence level is too low to localize. We showed that the confidence metric gives useful insight on the accuracy of the estimate, although it can lead to some false-negative cases. We have not found any corner cases of false positives, in other words, our solution never trusts a highly inaccurate solution. If we were to recommend a course of action depending on the confidence metric, it would be the following:
\begin{itemize}
\item confidence metric $\in [0.75,1]$: trust the estimate to be as accurate as it can be because it includes at least one level of redundancy,
\item confidence metric $\in [0.3,0.75]$: probably computed with no redundancy, trust the estimate to be fairly correct but slightly less accurate,
\item confidence metric $\in [0,0.3]$: the true source is in a probable zone around the estimate, this result is not very accurate and trusting it depends on the application,
\item algorithm~\ref{alg:localization} output is \emph{"corrupt$\_$system"}: the attacks are too important to define even a probable zone of location. 
\end{itemize}

\subsection{Trajectory Simulations}

Next, we simulate the localization of an unknown source at nine different time instants and compare the true trajectory, the trajectory estimated with our robust solution, and the trajectory naively estimated by trusting all measurements. The naive estimates are found by the WLS estimator with all weights set to $1$. This section shows that our solution substantially improves the accuracy of the estimates when compared with the accuracy of the naive estimates obtained by ignoring the presence of timing attacks. We do so in three different attack scenarios, one for each confidence metric interval identified above.
 
In the first scenario we consider, only sensor $S_4$ is attacked with a fixed delay of $30 \mu $s. The results are given in Figure~\ref{fig:diff1}. Shown as out-of-bounds, the naive estimates are more than $100'000$ km away from the true source position at the first three time instants. Such obviously incorrect estimates would be flagged as bad data as it is not plausible to consider an estimate that far. Then, for the six remaining time instants, the naive estimates are off by more than two kilometers. In contrast, when we zoom closer to the source at the sixth time instant, we can observe that the distance between the true source and our robust estimate is always under $60$ cm. Note that the confidence metric is always above $0.86$ in this scenario.

\begin{figure}
	\centering
	\includegraphics[width=0.9\columnwidth]{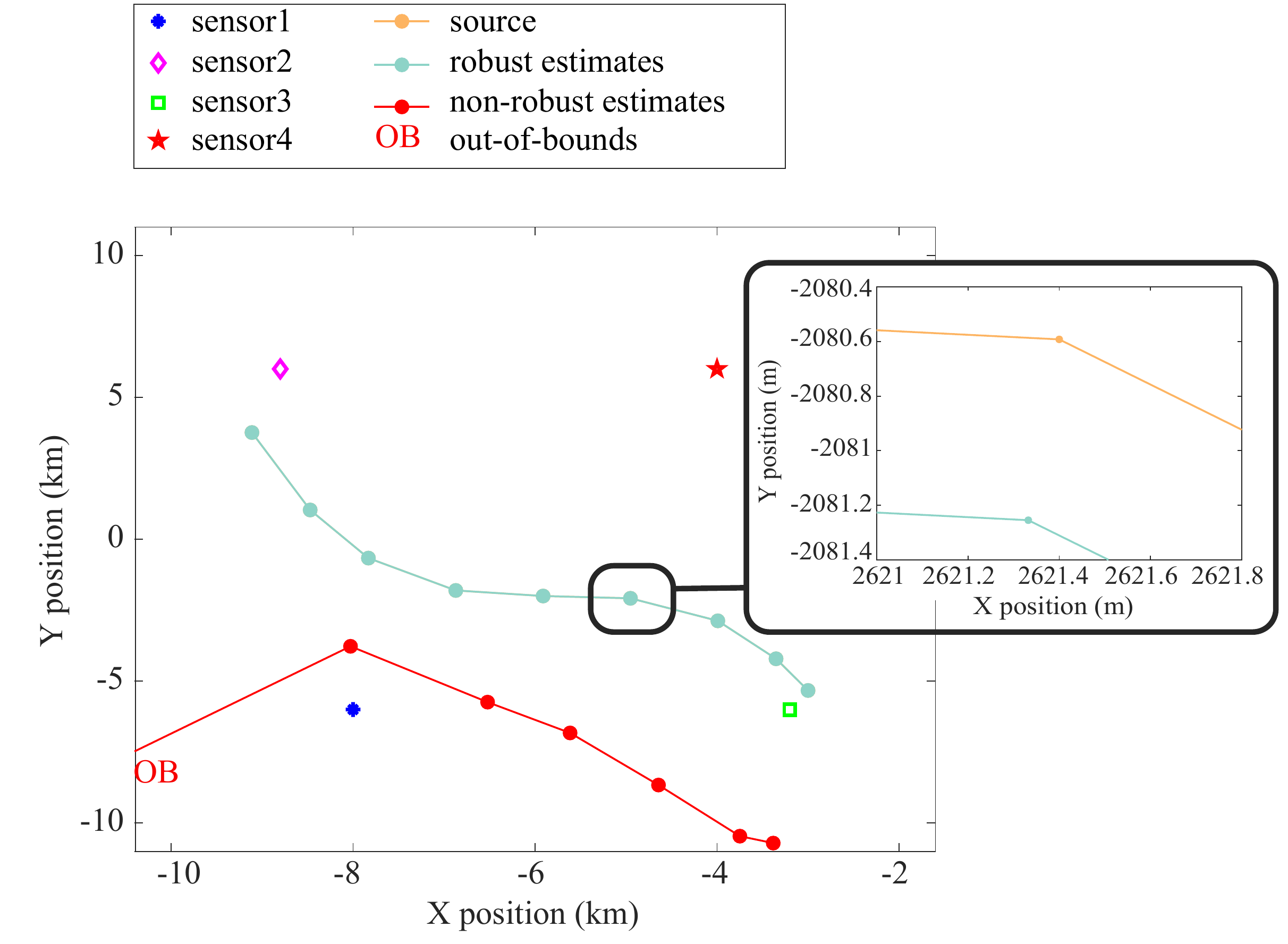}
	\caption{Estimation results at nine different time instants when $S_4$ is attacked by $30 \mu$s, the confidence metric is above $0.86$: the first three naive estimates are out-of-bounds and then wrong by $2$ km, whereas our solution provides estimates indistinguishable from the source; at the sixth time instant, our robust estimate is less than $60$cm away from the source.}
	\label{fig:diff1}
\end{figure}

In the second scenario, sensors $S_1$ and $S_3$ are both attacked with a fixed delay of $30 \mu $s. The results given in Figure~\ref{fig:diff2}, show that the estimate obtained naively is out-of-bounds at the first time instant. In fact, it is incorrect by more than $20$ km at first and by approximately $5$ km afterwards. In contrast, our solution provides estimates that are much more accurate, as they are all under $2$m away from the true position. In this scenario, the confidence metric is always above $0.38$.

\begin{figure}
	\centering
	\includegraphics[width=0.9\columnwidth]{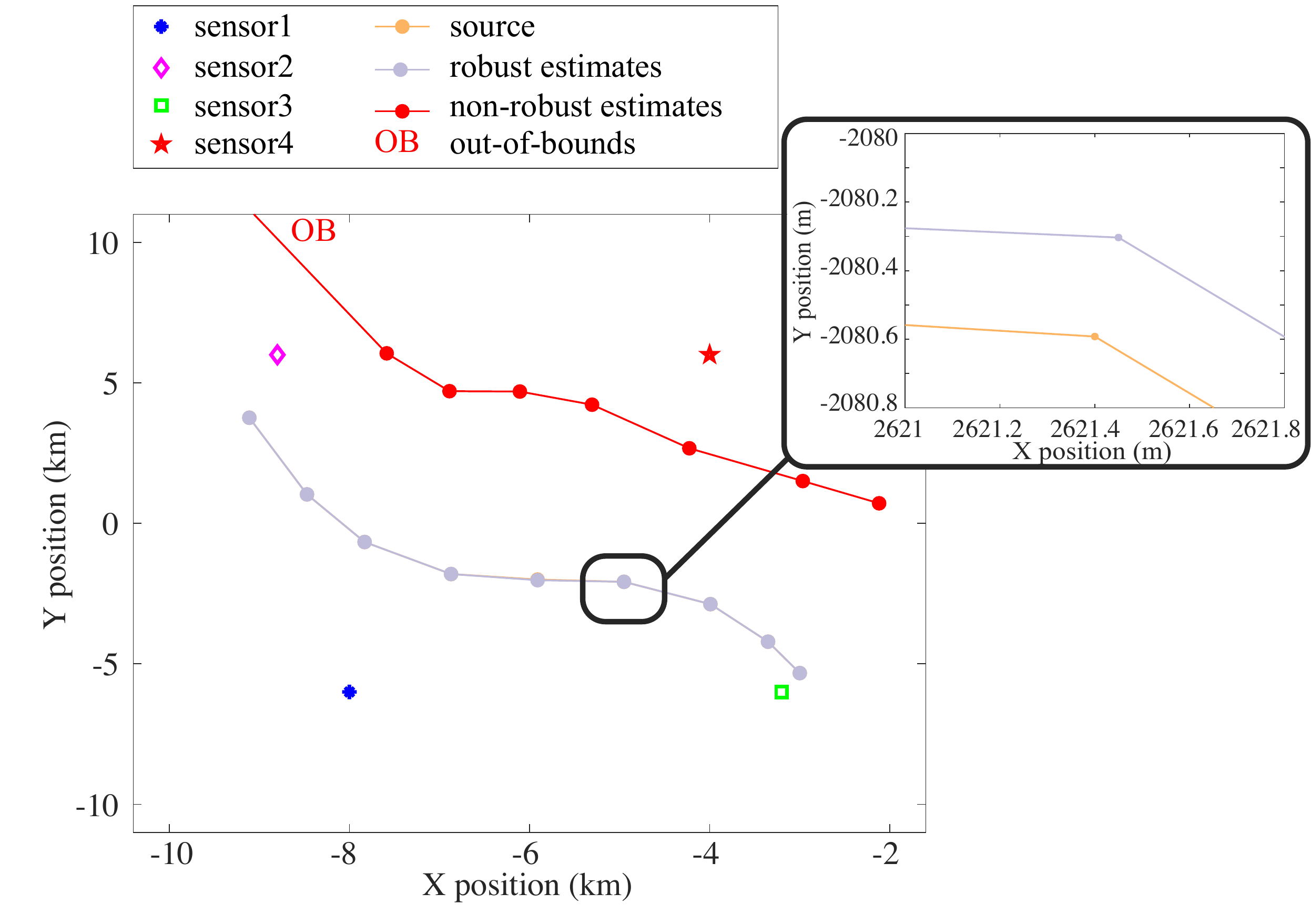}
	\caption{Estimation results at nine different time instants when $S_1$ and $S_3$ are attacked by $30 \mu$s, the confidence metric is above $0.38$: the naive estimate is out-of-bounds at first and then wrong by more than $5$ km, whereas our solution provides estimates indistinguishable from the source; at the sixth time instant, our robust estimate is less than $50$cm away from the source.}
	\label{fig:diff2}
\end{figure}

In the last trajectory scenario, we attack sensors $S_1$ and $S_2$ with a delay of $5$ seconds and we add a delay of $3n$s to $S_2$ and $S_4$. Namely, sensor pairs $(S_1,S_2)$ and $(S_3,S_4)$ are believed to be time synchronized when in fact, they have a delay difference slightly above the usual noise standard deviation $2.192n$s. In this scenario, the confidence metric is always well below $0.25$. Figure~\ref{fig:diff2d} shows that the naive estimates are always out-of-bounds as they are wrong by more than $1'000'000$ km. Figure~\ref{fig:diff2d} also shows that our estimate is always very close to the source. More precisely, our solution provides estimates that are always under $5$m away from the true source position, except for a corner case at the third time instant where our estimate is off by $15$m. 

\begin{figure}
	\centering
	\includegraphics[width=0.9\columnwidth]{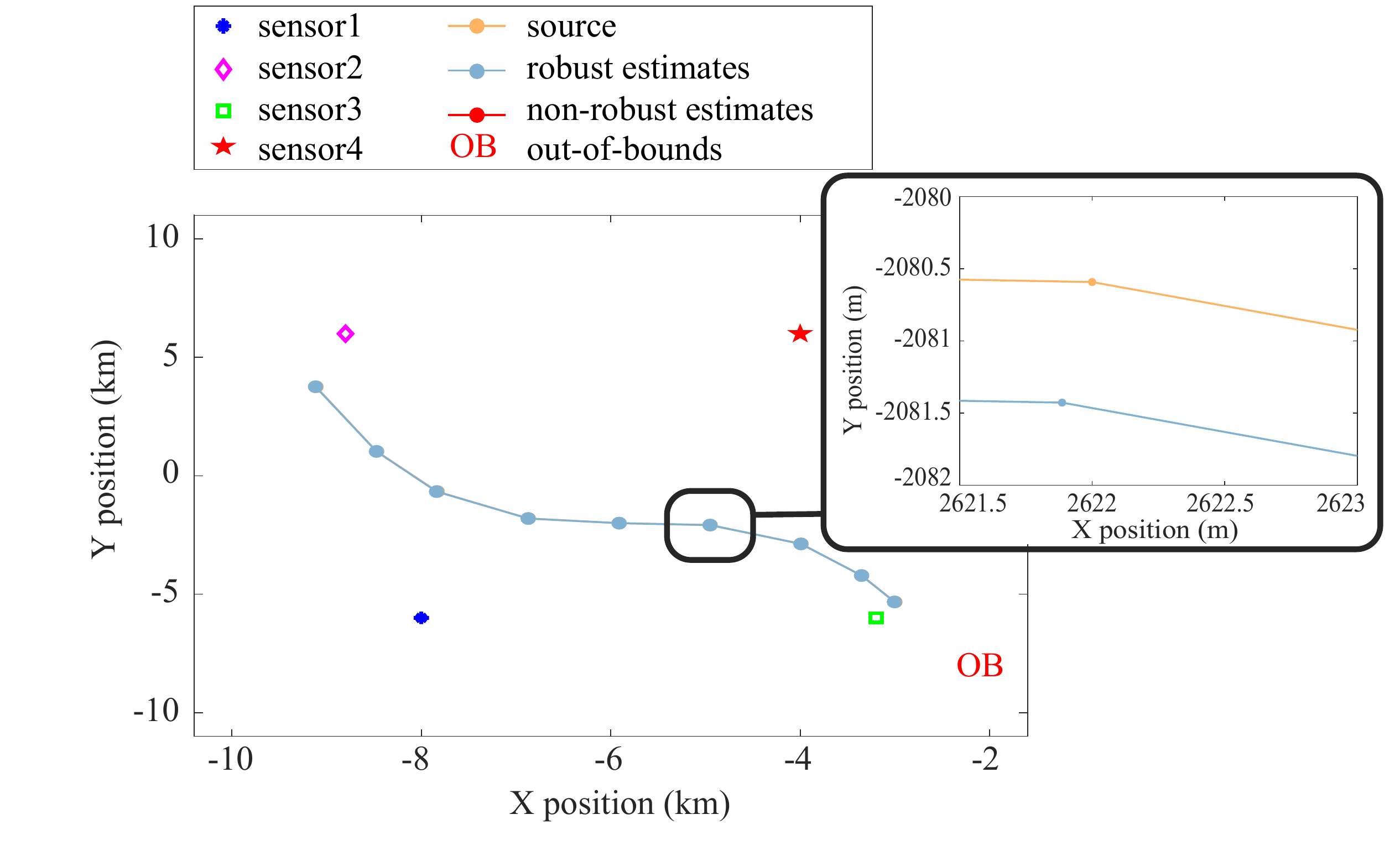}
	\caption{Estimation results at nine different time instants when $S_1$, $S_2$ and $S_4$ are respectively attacked with a delay of $5$s, $(5+3e-9)$s and $3e-9$s; the confidence metric is below $0.25$: the naive estimates are out-of-bounds at all time instants, whereas, our solution provides estimates indistinguishable from the source; at the sixth time instant, our robust estimate is less than $1$m away from the source.}
	\label{fig:diff2d}
\end{figure}

\subsection{Three-Dimensional Simulation}\label{sec:3D}

Lastly, we performed a three-dimensional trajectory simulation where sensor $S_1$ is attacked by $30 \mu$s as before. We added altitude coordinates $600$m, $1250$m, $900$m and $700$m to $S_1$, $S_2$, $S_3$ and $S_4$, respectively. In order to achieve the same level of redundancy as before, we placed a fifth sensor on the grid at an altitude of $400$ meters. From each TDOA, we computed the corresponding hyperboloids and used the coordinate-wise weighted median of intersection points as initial solution $(x_g, y_g,z_g)$ to the WLS estimator. The results are illustrated in Figure~\ref{fig:3D}. We observe in Figure~\ref{fig:3D} that the altitude of the naive estimates fluctuates far from the true altitude of the source. Whereas, our robust solution provides accurate estimates indistinguishable from the source. Similarly, Figure~\ref{fig:3D} shows that the 2D trajectory obtained with our solution on the $xy$-plane, matches with the trajectory of the source, whereas the naively estimated trajectory is always more than $2$km wrong. Finally, a close-up look in 2D at time instant number $4$ shows that our estimate is less than $2$m away from the source. However, not shown on the figures, at time instant number $4$, our estimate's altitude is off by approximately nine meters, whereas the naive estimate's altitude is off by more than a kilometer. Note that for this simulation, with weight function exponent $v=15.0776$, we obtain a high confidence metric equal to $0.94$.
In this simulation, the true source is at a constant altitude of $350$m, which is below the height of approximately the twenty highest towers on Earth. The minimal altitude found with the naive estimates is at $752$m, which is above all towers in the region of interest. A naive estimator would fail to detect a potential collision danger in this case. In contrast, the altitude of our robust estimates is always between $339$m and $377$m, which allows us to detect a dangerous flight behaviour.

\begin{figure}
\vspace{-0.5cm}
	\centering
	\includegraphics[width=0.8\columnwidth]{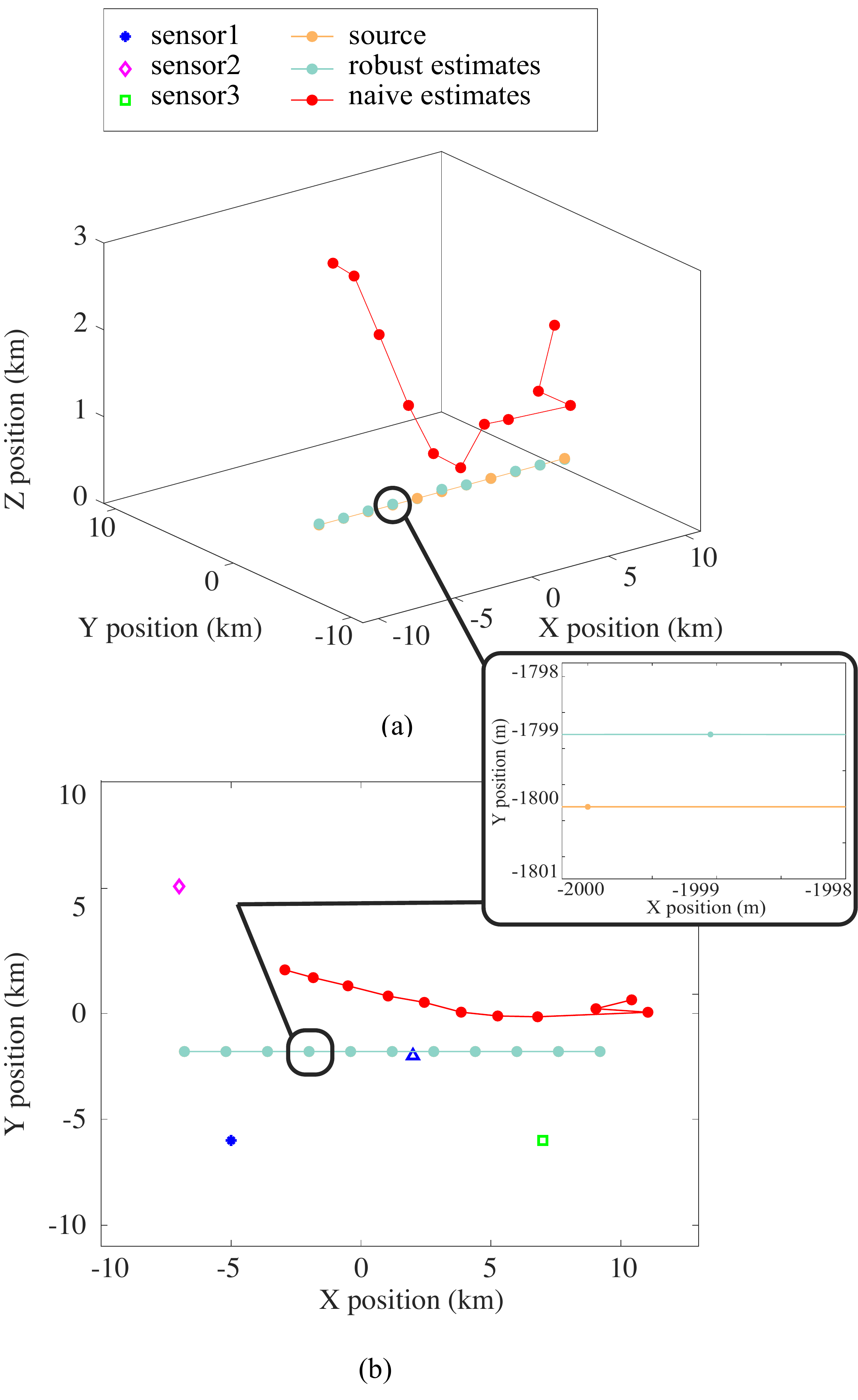}
	\caption{Estimation results in $3D$ at eleven different time instants when $S_1$ is attacked by $3e-5$s; the confidence metric is high at $0.94$: (a) The altitude of the naive estimates is very inaccurate at all time instants, our robust solution is indistinguishable from the source. (b) in the $xy$-plane, the naive estimates are more than $2$km wrong and our robust estimates are accurate. Zoom-in at time instant $4$: our robust estimate is less than $2$m away from the source on the $xy$-plane.}
	\label{fig:3D}
\end{figure}

\section{Conclusion}\label{sec:ccl}
To conclude, in this paper, we have shown that timing attacks on the time reference of the sensors of the network are a threat to TDOA localization. By injecting a few micro-seconds into the clock of a sensor, the network estimates the source to be located kilometers away from the true source position. We have also shown that a strong attacker with knowledge of the sensor coordinates and of the source coordinates, is able to choose the delays to inject such that the resulting misestimation results in a specific targeted location. To counter such timing attacks, we have proposed a robust technique that, to attribute weights to all sensor pairs, relies on signals from a known calibration source of a known position. These weights are computed to reflect the confidence we have in the time synchronization of the corresponding sensor pair. Subsequently, our localization technique uses these weights either to identify the network as too corrupt to localize, or to give an accurate estimate of the unknown source location. Our technique also provides a confidence metric that gives insight on the accuracy of the estimate. The calibration phase of our proposed solution is, however, vulnerable to replay attacks. In such attacks, the calibration signal is replayed at times and locations of the attacker's choice, thus possibly affecting the attributed weights. In order to counter these replay attacks, we have provided an encrypted authenticated challenge-response scheme that ensures that the measurements used for calibration are trustworthy.
Numerical evidence in 2D and 3D show that our technique is efficient and that the confidence metric is trustworthy although it might lead to false negatives.

\appendix
\section{Proofs}
{\small
According to Theorem~\ref{th:weak}, if the attacker is unable to jam signals, our encrypted authenticated challenge-response scheme is secure against a replay attack and the resulting $n$ measurements can then be given as input to Algorithm~\ref{alg:cal}. Now, in order to show that Claim~\ref{strong} is correct, we show that when an attacker is able to jam signals, successfully affecting $qm$ measurements, our strategy to select $n$ measurements is efficient, as long as $q \leq 0.45$. Specifically, we give numerical evidence that the weights that Algorithm~\ref{alg:cal} outputs from the $n$ selected measurements with and without a calibration attack are similar. We analyze results in two scenarios.

In the first, the measured TDOA value comes from a perfectly synchronized sensor pair. Hence, we require that Algorithm~\ref{alg:cal} on the $n$ selected measurements, outputs a weight close to $1$. In this case, the goal of the attacker is to jam signals in order to replay them with introduced delays such that Algorithm~\ref{alg:cal} outputs a weight close to $0$, thus making us discard trustworthy measurements. 
For every combination of calibration-attack size $a\in \{ 3,6,15000 \}$ and calibration-attack proportion $q \in \{ 0,0.1,0.2,0.3,0.4,0.45,0.5,0.6,0.7,0.8,0.9,1\}$, we performed the following procedure ten thousand times:
\begin{itemize}
\item we create a data set $DS_{160}$ of $m=160$ i.i.d samples where a proportion $q$ of the $m$ samples are drawn from $\mathcal{N}(\mu + a \sigma , \sigma)$ and the remaining are drawn from $\mathcal{N}(\mu, \sigma)$, with $\mu=7e-7$ and $\sigma=2.192e-9,$
\item we use our selection technique with $b=12$ bins, to extract the center of the highest Gaussian distribution and keep only the $n=30$ nearest samples, resulting in $DS_{30}$,
\item for both $DS_{160}$ and $DS_{30}$, we compute the p-value resulting from the z-test with $\mu=7e-7$ and $\sigma=2.192e-9$,
\item we apply the weight function of Algorithm~\ref{alg:cal} to the two p-values: we exponentiate them to the power $1/v$ with $v=15.0776$. 
\end{itemize}
At the end of the procedure, for every combination of $a$ and $q$, we obtain two data sets of ten thousand exponentiated p-values. Figure~\ref{fig:jammed} shows the sample mean of the resulting weights with a confidence interval for every $(a,q)$ pair, when using the entire $m=160$ measurements and when using the selected $n=30$ measurements. We observe that for a choice of $q \leq 0.45$, the weight obtained using the $30$ selected measurements is always high as in the no-attack scenario, in other words, as when $q=0$. Note that for $a=3$ and $q=0.45$, the weight is sometimes low although it is always large enough to ensure that the corresponding measurements are never incorrectly discarded. Such low weight values are due to the fact that the attack is small, comparable to large noise. Therefore the Gaussian distribution of the attacked measurements is merged with the Gaussian distribution of the unattacked measurements and the highest density peak is slightly shifted by the attack. When the attack is large enough for the intersection of the two Gaussian distributions to be empty, the attacked measurements are less likely to be mistaken for unattacked ones. Figure~\ref{fig:jammed} also shows that the weight obtained with all measurements without any particular strategy, decreases drastically as the size of the calibration attack increases. From this analysis, we observe that our strategy is very efficient in protecting our trust in synchronized sensor pairs.

\begin{figure}
	\centering
	\includegraphics[width=0.7\columnwidth]{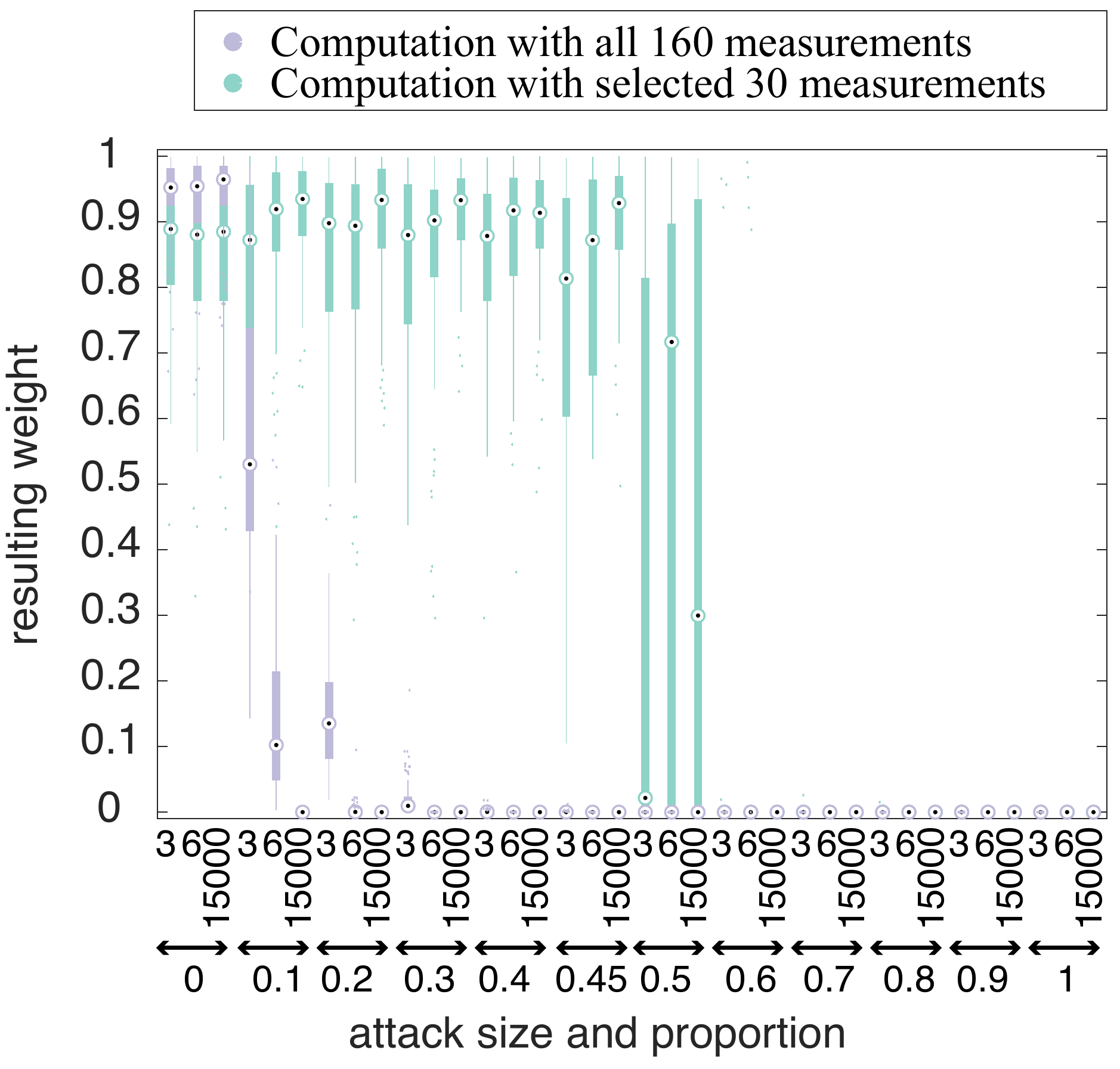}
	\caption{\small{Weight for an ideally synchronized sensor pair for different calibration-attack size and attack proportion from all $160$ and the selected $30$ measurements: when $q \leq 0.45$ the weight from the $30$ selected measurements is always high as in the no-attack scenario; the weight from all measurements decreases drastically as the size of the calibration attack increases.}}
	\label{fig:jammed}
\end{figure}

In the second scenario, the measured TDOA value comes from a sensor pair suffering from a timing attack. Hence, we require that Algorithm~\ref{alg:cal} outputs a weight close to $0$. In this case, the aim of the attacker is to attack calibration signals in order to compensate the synchronization delay between the attacked sensors. His goal is for Algorithm~\ref{alg:cal} to output a weight much higher than $0$, thus making us trust attacked measurements.
For every combination of timing-attack size $a\in \{ 3,6,15000 \}$ on the sensor pair, and calibration-attack proportion $q \in \{ 0,0.1,0.2,0.3,0.4,0.45,0.5,0.6,0.7,0.8,0.9,1\}$, we performed the following procedure ten thousand times:
\begin{itemize}
\item we create a data set $DS_{160}$ of $m=160$ i.i.d samples where a proportion $q$ of the $m$ samples are drawn from $\mathcal{N}(\mu  , \sigma)$ and the remaining are drawn from $\mathcal{N}(\mu+ a \sigma, \sigma)$, with $\mu=7e-7$ and $\sigma=2.192e-9,$
\item we use our selection technique with $b=12$ bins, to extract the center of the highest Gaussian distribution and keep only the $n=30$ nearest samples, resulting in $DS_{30}$,
\item for both $DS_{160}$ and $DS_{30}$, we compute the p-value resulting from the z-test with $\mu=7e-7$ and $\sigma=2.192e-9$,
\item we apply the weight function of Algorithm~\ref{alg:cal} to the two p-values: we exponentiate them to the power $1/v$ with $v=15.0776$. 
\end{itemize}
Figure~\ref{fig:jammed2} shows the sample mean of the resulting weights with a confidence interval for every $(a,q)$ pair, when using the entire $m=160$ measurements and when using the selected $n=30$ measurements. We observe that for a choice of $q \leq 0.45$, both the weights obtained using the $30$ and the $160$ measurements are low. From this analysis, we observe that when $q \leq 0.45$, our strategy is still very efficient in protecting our distrust in non-synchronized sensor pairs. 

\begin{figure}
	\centering
	\includegraphics[width=0.7\columnwidth]{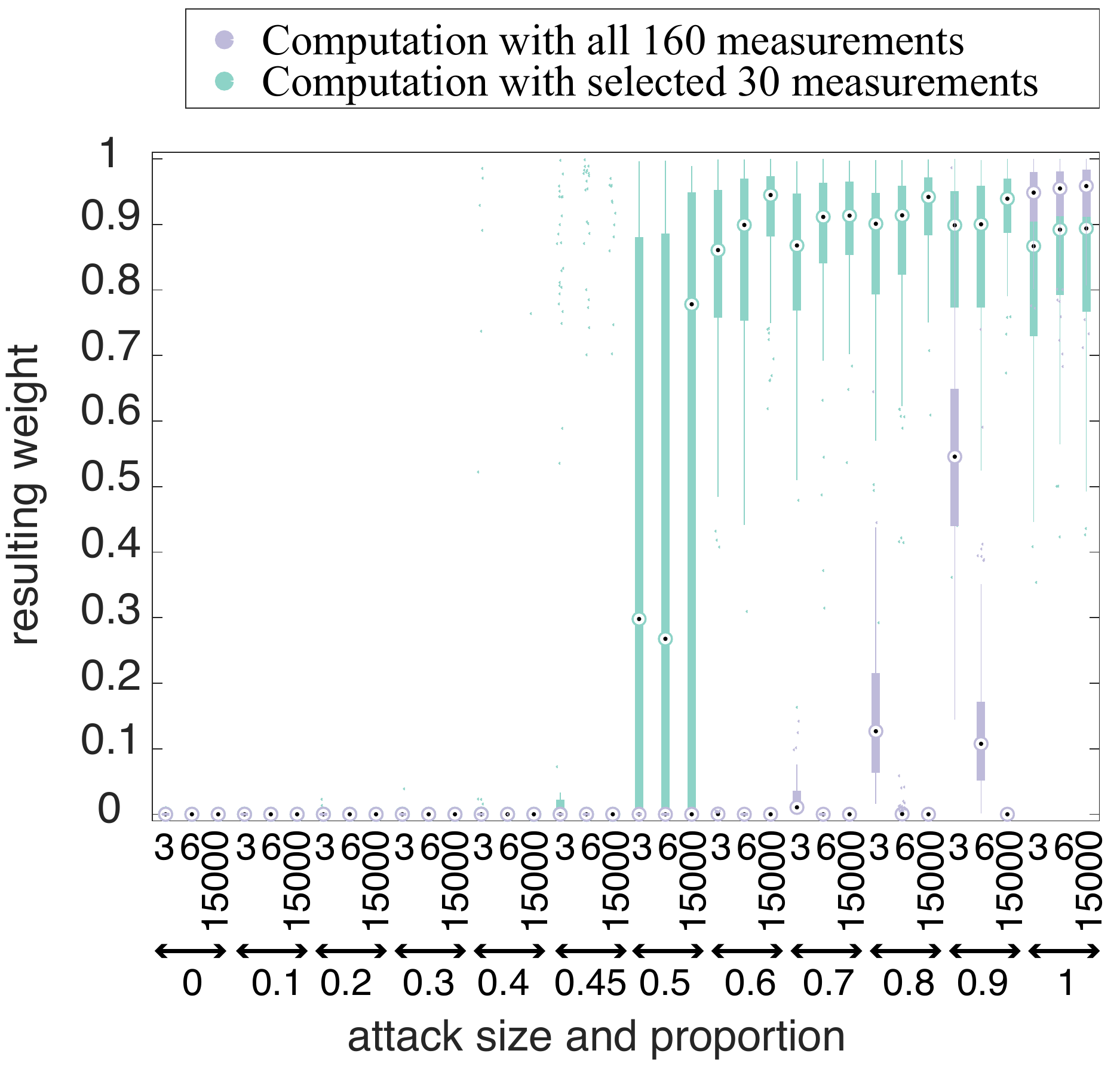}
	\caption{\small{Weight for a non-synchronized sensor pair for different timing-attack sizes and calibration-attack proportion from all $160$ and from the selected $30$ measurements: when $q \leq 0.45$, both the weights obtained using the $30$ and $160$ measurements are always low.}}
	\label{fig:jammed2}
\end{figure}

In summary, the numerical evidence shows that as long as $q \leq 0.45$, our technique to select $n$ out of the $m$ received measurements is efficient and enables Algorithm~\ref{alg:cal} to define proper weights. We further recall that if $q$ is selected too large, the attacker would be detected by other techniques such as SNR analysis.
}

\bibliographystyle{IEEEtran}
\bibliography{references-11}

\end{document}